\documentclass[10pt,final,journal,twocolumn]{IEEEtran}
\usepackage{amsmath,amsfonts,amsthm,amssymb}
\usepackage{setspace}
\usepackage{fancyhdr}
\usepackage{extramarks}
\usepackage{chngpage}
\usepackage[usenames,dvipsnames]{color}
\usepackage{ifthen}
\usepackage{courier}
\usepackage{graphicx}
\usepackage{textcomp}
\usepackage{amssymb}
\usepackage{cancel}
\usepackage{mathrsfs}
\usepackage{hyperref}
\usepackage{array}
\usepackage{relsize}
\usepackage{multirow,bigdelim}
\usepackage{blkarray}
\usepackage[nospace]{cite}
\usepackage{algorithm}
\usepackage[noend]{algpseudocode}\usepackage{subfigure}


\newcommand\numberthis{\addtocounter{equation}{1}\tag{\theequation}}

\newtheorem{theo}{\textbf{Theorem}}
\newtheorem{lem}{\textbf{Lemma}}

\theoremstyle{definition}
\newtheorem{defi}{\textbf{Definition}}

\newtheorem*{remk}{\textit{Remark}}

\allowdisplaybreaks

\makeatletter
\def\BState{\State\hskip-\ALG@thistlm}
\makeatother


\begin{document}

\title{{ITLinQ: A New Approach for Spectrum Sharing in Device-to-Device Communication Systems}}

\author{Navid~Naderializadeh and~A.~Salman~Avestimehr%
\thanks{Manuscript received December 1, 2013;  revised April 4, 2014 and May 24, 2014. The research of A. S. Avestimehr and N. Naderializadeh is supported by NSF Grants CAREER 1408639, CCF-1408755, NETS-1419632, EARS-1411244, ONR award N000141310094, AFOSR YIP award FA9550-11-1-0064, research grants from Intel and Verizon via the 5G project, and a gift from Qualcomm.

This work has been presented in part at the IEEE International Symposium on Dynamic Spectrum Access Networks (DySPAN), McLean, VA, April 2014 \cite{dyspan} and will also be presented in part at the IEEE International Symposium on Information Theory (ISIT), Honolulu, HI, July 2014 \cite{isit}.

The authors are with the Department of Electrical Engineering, University of Southern California, Los Angeles, CA 90089, USA (emails: naderial@usc.edu, avestimehr@ee.usc.edu).}}

\maketitle
\begin{abstract}
We consider the problem of spectrum sharing in device-to-device communication systems. Inspired by the recent optimality condition for treating interference as noise, we define a new concept of \emph{information-theoretic independent sets} (ITIS), which indicates the sets of links for which simultaneous communication and treating the interference from each other as noise is information-theoretically optimal (to within a constant gap). Based on this concept, we develop a new spectrum sharing mechanism, called \emph{information-theoretic link scheduling} (ITLinQ), which at each time schedules those links that form an ITIS. We first provide a performance guarantee for ITLinQ by characterizing the fraction of the capacity region that it can achieve in a network with sources and destinations located randomly within a fixed area. Furthermore, we demonstrate how ITLinQ can be implemented in a distributed manner, using an initial 2-phase signaling mechanism which provides the required channel state information at all the links. Through numerical analysis, we show that distributed ITLinQ can outperform similar state-of-the-art spectrum sharing mechanisms, such as FlashLinQ, by more than a 100\% of sum-rate gain, while keeping the complexity at the same level. Finally, we discuss a variation of the distributed ITLinQ scheme which can also guarantee fairness among the links in the network and numerically evaluate its performance.
\end{abstract}

\begin{IEEEkeywords}
Device-to-device communication, interference management, distributed spectrum sharing
\end{IEEEkeywords}

\section{Introduction}\label{intro}

Device-to-device (D2D) communication among mobile users is drawing considerable attention for the development of next-generation wireless communication systems. The D2D communication functionality can enable various applications and services (see, e.g. \cite{ericsn_5g,d2djef}), such as proximity-based applications involving discovering and communicating with nearby devices (e.g., Internet of Things). Such functionality can also enable higher data rates and system capacity by leveraging the underlying peer-to-peer wireless network that can be created via local communication among the users (see, e.g. \cite{d2dlte1,d2ddopp,fodor,d2dinterf,d2dicufn}). 
Moreover, incorporating caching capability into D2D communication networks have been shown to also significantly enhance the system throughput for applications that follow a popularity pattern, such as the on-demand video traffic for which a few dominant videos account for a large part of the traffic~\cite{caching}.

However, considering the increasing density of mobile users in wireless networks, the problem of spectrum sharing and interference management inside D2D communication networks becomes of vital importance for the aforementioned applications and improvements. The main challenge for interference management in such networks is that neither fully coordinated synchronous cellular-type approaches that rely on advanced physical layer designs, nor fully distributed and asynchronous WiFi-type mechanisms (such as CSMA/CA) are adequate. The downside of the first type of interference management mechanisms is that they need levels of centralization, coordination, and information at the mobile nodes that are difficult to accomplish in practice. On the other hand, the problem with the second type of approaches is that their performance degrades significantly as the number of links grows.

These issues have motivated a more recent approach that is based on a minimal level of coordination among the links which also maintains its promising performance for large numbers of links. This scheme, called FlashLinQ \cite{FLQ}, is a distributed scheduling scheme which demonstrates considerable improvement over pure CSMA/CA. In a system of multiple source-destination pairs (links), this scheduling algorithm first orders the links according to a randomly selected priority list. Then, starting from the higher-order links, each link is scheduled if it does not cause and does not receive ``much'' interference from the already scheduled links. The level of acceptable interference is determined based on the observed signal-to-interference ratio (SIR) at all the previously scheduled links \emph{and} also the current link. FlashLinQ has also been implemented and experimented in practice and shown to demonstrate promising performance compared to previous scheduling schemes.

FlashLinQ scheduling can also be viewed as a refinement of the conventional independent set scheduling which is based on using a conflict graph to model the interference among the links (see, e.g. \cite{is1,is2,is3,is4,is5} and the protocol model in \cite{kumar}). In the independent set scheduling approach, two links (source-destination pairs) are considered to be mutually non-interfering, hence able to transmit data at the same time, if the interference that they cause on each others' destinations is below a certain threshold. The drawback of this scheme is that this threshold is set at a fixed value (often at noise level) which does not capture the effect of the number of links, their density inside the cell area, etc. More importantly, the scheme does not consider the signal-to-noise ratio (SNR) level that each link itself can achieve and only takes the interference levels into account. FlashLinQ, however, overcomes this problem by comparing the direct signal power level that each link gets with its incoming interference power level. Also, in the FlashLinQ scheduling algorithm, if a link does not cause/receive much interference to/from higher-priority links, but does not get a high direct signal power itself, it gets silent and ``yields'' such that lower-priority links have the opportunity to contribute more to the overall sum-throughput of the network.



Hence, both FlashLinQ and independent set scheduling approaches aim at finding subsets of links in which the interference among them is at a ``sufficiently'' low level, so that their simultaneous transmissions are not detrimental to each other. This gives rise to a natural question: what would be a theoretically-justified way of creating such subsets, and determining whether the interference among them is at a ``sufficiently'' low level? 

In this paper, we propose an answer to this question. We define a new concept of \emph{information-theoretic independent sets} (ITIS), which indicates the sets of links for which simultaneous communication and treating the interference from each other as noise is information-theoretically optimal (to within a constant gap). In other words, a subset of links forms an ITIS if by a simple scheme of using point-to-point Gaussian codebooks with appropriate power levels at each transmitter and treating interference as noise at every receiver we can achieve the entire information-theoretic capacity region of that subset of links (to within a constant gap). We use the recent optimality condition for treating interference as noise in \cite{tin} to provide a description of ITIS based on the channel gains among the links in the network. In fact, as we will see later, a subset of links is defined to create an ITIS if for any link in the subset, the SNR level is no less than the sum of its strongest incoming interference-to-noise ratio (INR) and its strongest outgoing INR (all measured in dB scale). It is important to note that this condition is quite different from that of FlashLinQ and independent set scheduling which only rely on thresholds on SIR and INR values to identify the subsets of links with ``sufficiently'' low levels of interference.

Furthermore, we propose our new spectrum sharing mechanism, named  \emph{information-theoretic link scheduling} (in short, ITLinQ), which schedules the links in an information-theoretic independent set to transmit data at the same time. We characterize the guaranteed fraction of the capacity region that ITLinQ is able to achieve in a specific network setting. In particular, we consider a set of $n$ source-destination pairs, where the source nodes are spread randomly and uniformly over a circular cell of fixed radius and each destination node is located within a distance $r_n\propto n^{-\beta}$ of its corresponding source node. For the channel gains, we only consider the path-loss model. In such a setting, we show that the criteria for defining information-theoretic independent sets transforms the network into a random geometric graph which enables us to characterize the fraction of capacity region that can be achieved by the ITLinQ scheme. In fact, depending on the value of $\beta$, we identify three regimes in each of which ITLinQ can achieve a fraction $\lambda$ of the capacity region within a gap of $k$ almost-surely:

\begin{itemize}
\item For $0<\beta<1$, $\lambda= \Theta\left( n^{\beta-1}\right)$ and $k=O\left(\frac{\log 3n}{n^{1-\beta}}\right)$.\footnote{For two functions $f(n)$ and $g(n)$ defined on the set of positive integers $\mathbb{Z}_+$, $f(n)=O(g(n))$ if and only if there exists a positive real number $a$ and a positive integer $n_0$ such that for all $n>n_0$, $|f(n)|\leq a |g(n)|$. Also, $f(n)=\Theta(g(n))$ if and only if there exist two positive real numbers $a_1$ and $a_2$ and a positive integer $n_0$ such that for all $n>n_0$, $a_1 |g(n)|\leq |f(n)|\leq a_2 |g(n)|$.}

\item For $\beta=1$, $\lambda= \frac{\ln \left({\ln n}\right)}{\ln n}$  and
\begin{align*}
k= O\left(\log (\ln n)+\frac{ \ln \left({\ln n}\right)}{\ln n}\right).
\end{align*}

\item For $\beta>1$, $\lambda= \Theta(1)$ and $k=O({\log 3n})$.
\end{itemize}

This shows a considerable improvement over the fraction of the capacity region that the conventional independent set scheduling can achieve, which is $\frac{1}{n}$ (derived via numerical analysis). We will also show that the network model in which each destination gets associated with the closest source to itself is a subclass of the above model for any $\beta<\frac{1}{2}$, and therefore we can asymptotically achieve a fraction $\Theta(\sqrt{n})$ of the capacity region in this case almost-surely.

Afterwards, we will focus on the challenge of distributed implementation of the ITLinQ scheme. We will develop a distributed spectrum sharing scheme based on ITLinQ, whose complexity is comparable to the FlashLinQ algorithm. The conditions that need to be satisfied for the sources and destinations in this scheme are based on the sufficient conditions for the optimality of treating interference as noise (as mentioned in \cite{tin}), hence providing a strong theoretical backbone for the algorithm. We will numerically evaluate the performance of our distributed scheme and compare it with FlashLinQ in an outdoor setting with 8 - 4096 links of random lengths spread uniformly at random in a square cell. We observe that the sum-rate achieved by the distributed ITLinQ scheme improves over that of FlashLinQ by more than 100\%, while keeping the complexity basically at the same level. Finally, we introduce a slight variation to the distributed ITLinQ scheme so as to consider fairness among the links in the network and show that our fair ITLinQ scheme achieves almost the same tail distribution for the link rates as in FlashLinQ, while demonstrating over 50\% sum-rate gain.

The rest of the paper is organized as follows. In Section \ref{desc_itq}, we formally describe the notion of information-theoretic independent sets and the ITLinQ scheme and present a capacity analysis of this scheme. In Section \ref{decent}, we will propose a distributed version of the ITLinQ scheme. In Section \ref{sim}, we numerically evaulate the performance of distributed ITLinQ and fair ITLinQ and compare them with FlashLinQ. We will conclude the paper in Section \ref{conc}.

\section{Description and Analysis of the Information-Theoretic Link Scheduling Scheme}\label{desc_itq}

In this section, we introduce our scheduling scheme, which we call ``information-theoretic link scheduling'' (in short, ``ITLinQ''). We start by defining the notion of ``information-theoretic independent set'' (in short, ``ITIS'') and then move forward to describe the ITLinQ scheme. Afterwards, we will consider a specific network setting and in that setting, we will characterize the fraction of capacity region that ITLinQ is able to achieve to within a gap.

\subsection{Description of ITIS and ITLinQ}\label{defs}

We consider a wireless network composed of $n$ sources $\{\text{S}_i\}_{i=1}^n$ and $n$ destinations $\{\text{D}_i\}_{i=1}^n$ in which each source aims to communicate a message to its corresponding destination. All the links (i.e., source-destination pairs) are considered to share the same spectrum, which gives rise to interference among all the transmissions. We assume that all the nodes (i.e., all the sources and the destinations) know how many links exist in the network and they also agree on a specific ordering of the links, where by ordering we mean a labeling of the links from 1 to $n$. Furthermore, we assume that the nodes are synchronous; i.e., there exists a common clock among them.

The physical-layer model of the network is considered to be the AWGN model in which each source $\text{S}_i$ intends to send a message $W_i$ to its corresponding destination $\text{D}_i$, and does so by encoding its message to a codeword $X_i^l$ of length $l$ and transmitting it within $l$ time slots. There is a power constraint of $\mathbb{E}\left[\frac{1}{l}\| X_i^l \|^2\right]\leq P$ on the transmit vectors. The received signal vector of destination $j$ over the $l$ time slots will be equal to
\begin{align*}
Y_j^l=\sum_{i=1}^n h_{ji} X_i^l + Z_j^l,
\end{align*}
where $h_{ji}$ denotes the channel gain between source $i$ and destination $j$, and $Z_j^l$ denotes the additive white Gaussian noise vector at destination $j$ with distribution $\mathcal{CN}(0,N\mathbf{I}_l)$, $\mathbf{I}_l$ being the $l \times l$ identity matrix. An example of such a network configuration is illustrated in Figure \ref{fig_network}.

\begin{figure}[h]
\centering
\includegraphics[trim = 0.4in .3in 0.2in .8in, clip,width=0.35\textwidth]{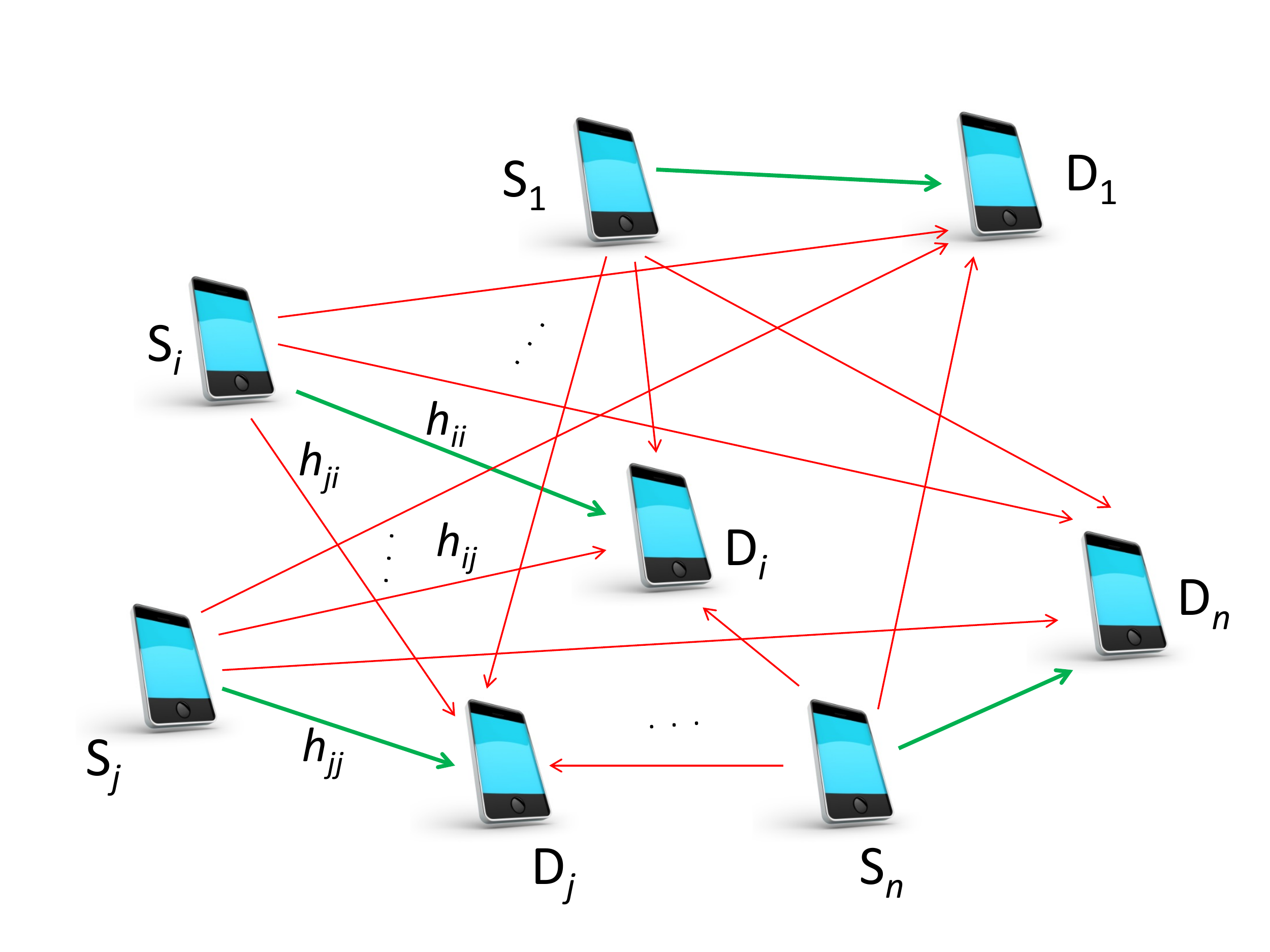}
\caption{A wireless network composed of $n$ source-destination pairs, where the green and red lines represent the direct and cross channel gains, respectively.}
\label{fig_network}
\end{figure}

We assume that at each destination, all the incoming interference is treated as noise. Therefore, each source-destination pair $\text{S}_i-\text{D}_i$ can achieve the rate of $R_i=\log(1+\text{SINR}_i)$, where $\text{SINR}_i\triangleq\frac{P|h_{ii}|^2}{\sum_{j\neq i} P |h_{ij}|^2+N}$ denotes the signal-to-interference-plus-noise ratio at destination $i$.

In general, treating interference as noise is known to be suboptimal for the general interference channel and numerous more sophisticated physical-layer coding schemes (such as message splitting and successive interference cancellation \cite{hk,etw}, interference alignment \cite{cadambe,maddahali}, and structured coding \cite{bpt,oen,jv}) have been proposed in order to improve it. However, the recent result in \cite{tin} proves that under a general condition in a network consisting of multiple source-destination pairs, treating interference as noise is information-theoretically optimal (to within a constant gap). The result is reflected in Theorem \ref{tin_thm}.

\begin{theo}[\hspace{-.001in}\cite{tin}]\label{tin_thm}
In a wireless network of $n$ source-destination pairs, if the following condition is satisfied, then treating interference as noise (in short, TIN) can achieve the whole capacity region to within a constant gap of $\log 3n$:
\begin{align}\label{tin_opt}
\emph{SNR}_i\geq \max_{j\neq i} \emph{INR}_{ij}  \max_{k\neq i} \emph{INR}_{ki},~~\forall i=1,...,n,
\end{align}
where $\emph{SNR}_i\triangleq\frac{P|h_{ii}|^2}{N}$ and $\emph{INR}_{ij}\triangleq\frac{P|h_{ij}|^2}{N}$ denote the signal-to-noise ratio of link $i$ and the interference-to-noise ratio of source $j$ at destination $i$, respectively.
\end{theo}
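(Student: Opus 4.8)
The plan is to prove the statement in two complementary halves. First, an \emph{achievability} half, showing that power-controlled Gaussian point-to-point codebooks with interference treated as noise at every receiver attain (to within a bounded slack) an explicit polyhedral region $\mathcal{P}$. Second, a \emph{converse} half, showing that under condition~(\ref{tin_opt}) the full capacity region $\mathcal{C}$ is contained in $\mathcal{P}$ enlarged by $\log 3n$ in each coordinate. Since $\mathcal{P}\subseteq\mathcal{R}_{\text{TIN}}\subseteq\mathcal{C}$ by construction, these two facts sandwich the capacity region as $\mathcal{P}\subseteq\mathcal{C}\subseteq\mathcal{P}+\log(3n)\,\mathbf{1}$, which is exactly the claim that TIN achieves $\mathcal{C}$ to within a gap of $\log 3n$.

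For achievability, I would fix a power allocation $P_i\le P$ at each source, giving the TIN rate $R_i=\log\!\bigl(1+\tfrac{P_i|h_{ii}|^2}{N+\sum_{j\ne i}P_j|h_{ij}|^2}\bigr)$, and take the union over all feasible $(P_i)$. The key step is to extract an explicit polyhedral inner bound $\mathcal{P}$ for this union. I would substitute $P_i=P\cdot\text{SNR}^{-r_i}$ with $r_i\ge 0$, which linearizes the dominant terms on a logarithmic scale, and then either eliminate the $r_i$ by Fourier–Motzkin or directly construct the power levels achieving each vertex. The outcome is that $\mathcal{P}$ is cut out by the nonnegativity constraints, the singleton bounds $R_i\le\log(1+\text{SNR}_i)$, and, for every subset $S$ and every cyclic ordering $(i_1\!\to\! i_2\!\to\!\cdots\!\to\! i_m\!\to\! i_1)$ of $S$, a bound of the form $\sum_k R_{i_k}\le\sum_k\log\!\bigl(\text{SNR}_{i_k}/\text{INR}_{i_{k+1}\,i_k}\bigr)$ with cyclic indices. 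Condition~(\ref{tin_opt}) enters here because it forces $\text{SNR}_{i_k}\ge\text{INR}_{i_{k+1}\,i_k}$ for every cycle edge, so each of these cyclic bounds is nonnegative and the region $\mathcal{P}$ is nontrivial.

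For the converse I would match each such cyclic bound (up to the additive gap) by a genie-aided argument. The idea is to hand receiver $i_k$ a carefully calibrated noisy observation of its own transmitted codeword, with noise level matched to how that signal appears as interference at the \emph{next} receiver $i_{k+1}$ in the cycle (essentially $h_{i_{k+1}\,i_k}X_{i_k}^l+\tilde Z^l$); then, after invoking Fano's inequality and the chain rule, the ``useful signal'' terms telescope around the cycle and only the interference-strength terms survive, yielding the desired sum bound. Condition~(\ref{tin_opt})---that each $\text{SNR}_i$ dominates the product of its strongest incoming and strongest outgoing INR---is precisely what makes the genie-aided channel degrade ``just barely'' to the TIN rates, so that the gap stays bounded. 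The singleton bounds $R_i\le\log(1+\text{SNR}_i)$ come from the usual point-to-point converse applied after revealing all interfering messages.

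The main obstacle, and where essentially all the work lives, is the bookkeeping of the additive losses in the converse: each genie insertion and each step that replaces a sum of up to $n-1$ interfering powers by $n$ times the maximum contributes a bounded number of bits, and one has to show these never accumulate past $\log 3n$---the ``$3$'' tracing back to the three-term structure of~(\ref{tin_opt}) (direct signal, strongest incoming interference, strongest outgoing interference) and the ``$n$'' to bounding that interference sum by its maximum. A secondary technical point is checking that the achievable polyhedron $\mathcal{P}$ and the outer polyhedron produced by the converse share the \emph{same} face structure, namely the cyclic and singleton inequalities, so that the two indeed sandwich coordinate-wise; this rests on the combinatorial fact that the extreme points of the power-control TIN region are indexed exactly by the cyclic structures on subsets of links.
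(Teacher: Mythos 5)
The paper does not prove this theorem itself---it is imported verbatim from \cite{tin}---so the only meaningful comparison is against that reference, and your outline matches its argument essentially step for step: a polyhedral TIN inner bound with cyclic sum-rate constraints obtained via power control, a matching cyclic genie-aided converse in which receiver $i_k$ is handed $h_{i_{k+1}i_k}X_{i_k}^l$ plus noise so that the signal terms telescope around the cycle, and a $\log 3n$ gap accounted for by the bookkeeping you describe. Your plan is correct and takes essentially the same route as the cited proof.
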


\begin{remk}
This result can intuitively be explained under the deterministic channel model of \cite{adt} as follows. Consider the deterministic model for link $i$ as shown in Figure \ref{tin_pic}.
%
\begin{figure}[h]
\centering
\includegraphics[trim = 1.4in 1.85in 1.6in 1.5in, clip,width=0.45\textwidth]{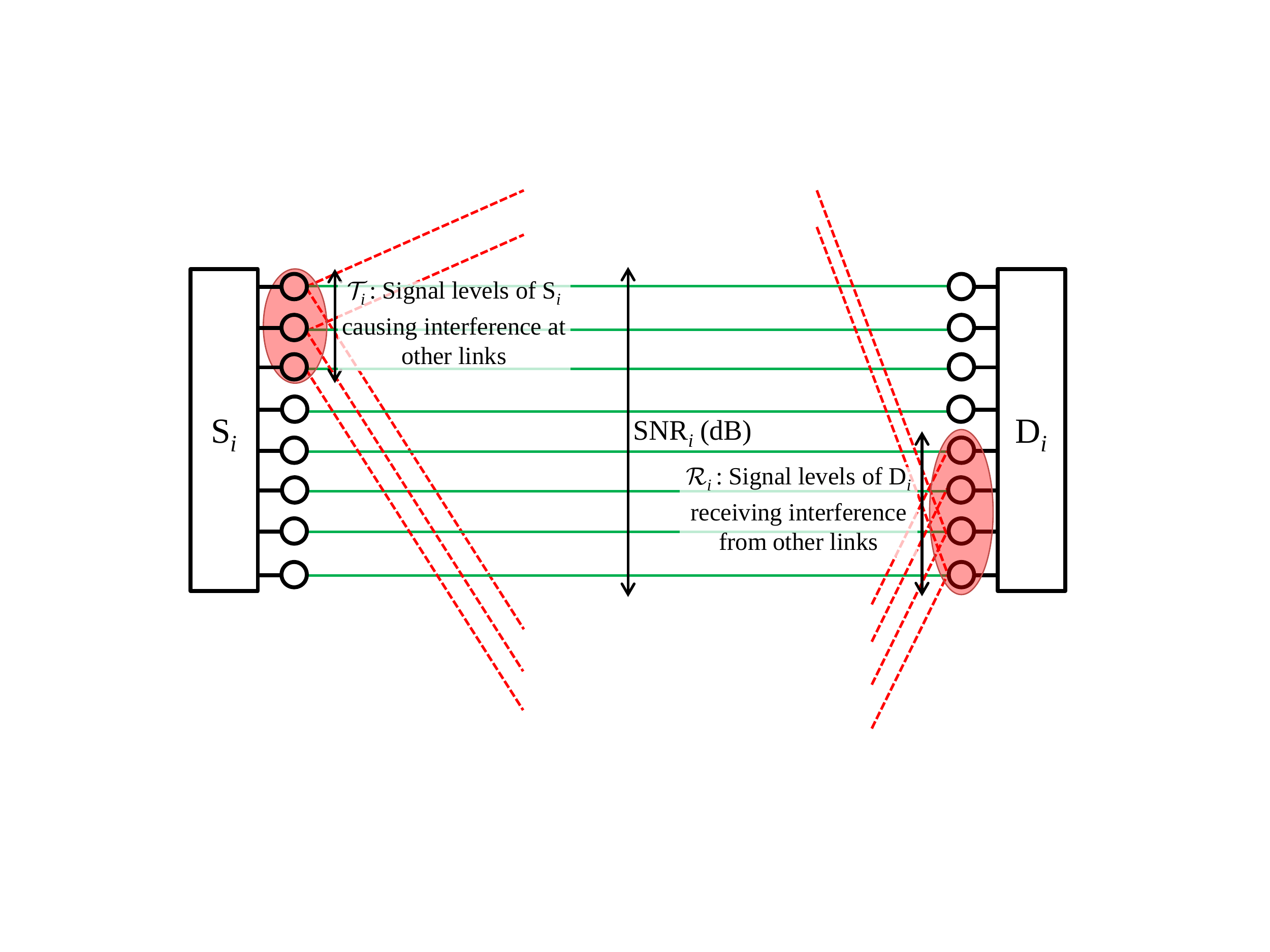}
\caption{A deterministic view of the optimality condition for treating interference as noise.}
\label{tin_pic}
\end{figure}
In this figure, each little circle represents a signal level. The transmit and received signal levels are sorted from MSB to LSB from top to bottom at the source and the destination respectively. The channel gain between source $i$ and destination $j$ in the deterministic model, denoted by $n_{ji}$ indicates how many of the first MSB transmitted signal levels of source $i$ are received at destination node $j$. Now, let us consider all transmit signal levels of source $i$ that are interfering to destinations $j \neq i$ in the network. The total number of them is $\max_{j \neq i} n_{ji}$, and they are depicted by the the set $\mathcal{T}_i$ in Figure \ref{tin_pic}. Similarly, we can consider all received signal levels of destination $i$ that are receiving interference from sources $j \neq i$ in the network. The total number of them is $\max_{j \neq i} n_{ij}$, and they are depicted by the the set $\mathcal{R}_i$ in Figure \ref{tin_pic}. Now, by taking logarithm of both sides of inequality (1) in Theorem 1, the condition in (1) can be described as $|\mathcal{T}_i|+|\mathcal{R}_i| \leq \log \text{SNR}_i=n_{ii}$, which means that there is no connection (or coupling) between the signal levels of source $i$ that are causing interference and the received signal levels of destination $i$ that are getting interference from all other nodes in the network.
\end{remk}

\begin{remk}
The gap of $\log 3n$ mentioned in Theorem \ref{tin_thm} is in fact the worst-case gap between the achievable rate region by TIN and the outer bound. We have numerically evaluated the actual gap for the case of randomly-generated networks with 8 links which satisfy condition (\ref{tin_opt}) and the result is illustrated in Figure \ref{actual_gap} in the form of the cumulative distribution function (CDF).
\begin{figure}[h]
\centering
\includegraphics[trim = 0.4in 2.7in .5in 2.8in, clip,width=0.45\textwidth]{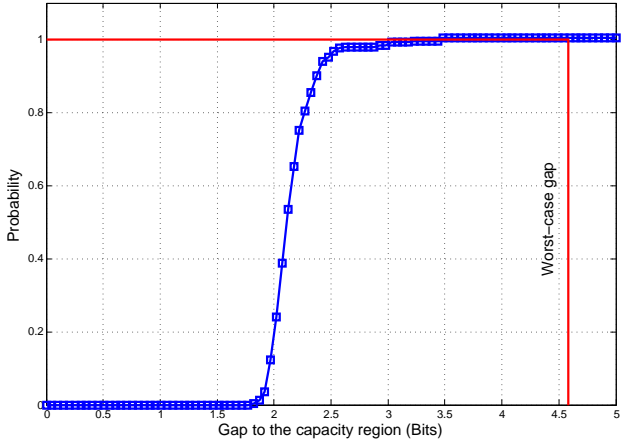}
\caption{Cumulative distribution function (CDF) of the actual gap of the achievable rate region by TIN with respect to the capacity region for networks with 8 links satisfying condition (\ref{tin_opt}) and its comparison with the worst-case gap of $\log 3n=\log 24\approx4.58$.}
\label{actual_gap}
\end{figure}
As the figure suggests, the actual gap to the capacity region achievable by TIN is much smaller than the worst-case gap of $\log 3n$ with a high probability.

\end{remk}

Therefore, if we consider any subset of the source-destination pairs in a wireless network and show that condition (\ref{tin_opt}) is satisfied in that subset, then we know that TIN is information-theoretically optimal in that subset of the links (to within a constant gap). This implies that the interference is at such a low level in this subnetwork that makes it suitable to call such a subset an ``information-theoretic independent subset''. More formally, we have the following definition.

\begin{defi}[ITIS]\label{itis}
In a wireless network of $n$ links, a subset of the links $\mathcal{S}\subseteq\{1,...,n\}$ is called an \emph{information-theoretic independent set} (in short, ITIS) if for any link $i\in\mathcal{S}$,
\begin{align}\label{itis_cond}
\text{SNR}_i\geq \max_{j\in\mathcal{S}\setminus \{i\}} \text{INR}_{ij}  \max_{k\in\mathcal{S}\setminus \{i\}} \text{INR}_{ki}.
\end{align}
\end{defi}

As it is clear, the difference between such a concept and the regular notion of an independent set lies in the fact that in the latter case, the interference between any pair of links should be below a certain threshold (e.g., noise level), whereas in the former case, the interference between all of the links is at such a low level (determined by condition (\ref{tin_opt})) that makes it (to within a constant gap) information-theoretically optimal to treat all the interference as noise. Based on the concept of ITIS, we define our scheduling scheme as follows.

\begin{defi}[ITLinQ]\label{itq}
The information-theoretic link scheduling (in short, ITLinQ) scheme is a spectrum sharing mechanism which at each time, schedules the sources in an information-theoretic independent set (ITIS) to transmit simultaneously. Moreover, all the destinations will treat their incoming interference as noise.
\end{defi}


\begin{remk}
In order to gain more intuition about the information theoretic independent sets, one can consider a simple sufficient condition for the scheduling condition in (\ref{itis_cond}). It is easy to verify that a subset of links $\mathcal{S}$ form an ITIS if for any link $i\in\mathcal{S}$,
\begin{align*}
\text{INR}_{ij}\leq\sqrt{\text{SNR}_i}~,~~ \text{INR}_{ji}\leq\sqrt{\text{SNR}_i}~,~ \forall j\in\mathcal{S}\setminus \{i\},
\end{align*}
which is the same as the condition for the optimality of TIN for a network with only two links which is mentioned in \cite{etw}. In fact, this condition compares the \emph{ratio} between the INR and SNR values \emph{in dB scale} with a fixed threshold of $\frac{1}{2}$. This is the main distinction of this condition compared to the conditions used in FlashLinQ, in which the \emph{difference} between the INR and SNR values in dB scale is compared with a fixed threshold. We will use this sufficient condition later in the paper for both the capacity analysis and the distributed implementation of the ITLinQ scheme.
\end{remk}


In Section \ref{decent}, we will show how to implement the ITLinQ scheme in a distributed way. However, for now, we will focus on characterizing the fraction of the capacity region that ITLinQ is able to achieve in a specific network setting.

\subsection{Capacity Analysis of the ITLinQ Scheme}\label{cap}

In this section, we analyze the fraction of the capacity region that the ITLinQ scheme can achieve to within a gap in a network with a large number of links.
We consider a network in which the sources are placed uniformly and independently inside a circle of radius $R$. After placing the source nodes, each destination node $\text{D}_i$ (associated with the source node $\text{S}_i$) is assumed to be located within a distance $r_n=r_0n^{-\beta}$ of $\text{S}_i$, where $r_0$ is a fixed distance and $\beta$ is a positive exponent. Note that we do not assume any particular distribution for the placement of the destination nodes as long as each of them lies within a circle of radius $r_0 n^{-\beta}$ around its corresponding source node. This represents a dense network in which the intended destinations are typically located closer to their sources, which is a characteristic of D2D networks. Moreover, in this section, we assume that each channel gain is a deterministic function of the distance between its corresponding source and destination. In fact, we consider the path-loss model for the channel gains in which the squared magnitude of the channel gain at distance $r$ is equal to $g_{0}r^{-\alpha}$, where $g_{0}\in\mathbb{R}$ is a fixed real number and $\alpha$ denotes the path-loss exponent. In Section \ref{fading}, we will study the case in which Rayleigh fading is included in the channel model as well.

For such a network and channel model, we have the following theorem (which will be proved later in this section) that presents a guarantee on the fraction of the capacity region that can be achieved by the ITLinQ scheme.

\begin{theo}\label{main_frac}
For sufficiently large number of links ($n\rightarrow\infty$) in the above model, the ITLinQ scheme can almost-surely achieve a fraction $\lambda$ of the capacity region within a gap of $k$ bits\footnote{This implies that for any rate tuple $(R_1,...,R_n)$ in the capacity region of the network, ITLinQ is (almost-surely) able to achieve the rate tuple $(\lambda R_1-k, ..., \lambda R_n-k$).}, where
%
%
\begin{align*}
\begin{cases}
\lambda= \frac{2\pi R^2}{\sqrt{3}\gamma^2} n^{\beta-1}~,~ k\leq\frac{2\pi R^2}{\sqrt{3}\gamma^2}\frac{\log 3n}{n^{1-\beta}} &\emph{if } 0<\beta<1\\
\lambda= \frac{\ln \left({\ln n}\right)}{\ln n} ~,~ k\leq \log (\ln n)+\frac{(\log 3) \ln \left({\ln n}\right)}{\ln n}&\emph{if }\beta=1\\
\lambda= \frac{1}{\left\lfloor  \tfrac{1}{\beta-1}+\tfrac{1}{2}\right\rfloor+1} ~,~ k\leq \frac{\log 3n}{\left\lfloor  \tfrac{1}{\beta-1}+\tfrac{1}{2}\right\rfloor+1}&\emph{if }\beta>1,
\end{cases}
\end{align*}

in which $\gamma=\sqrt[2\alpha]{\frac{P}{N} {g_0} r^{\alpha}_0}$ is a constant independent of $n$.
\end{theo}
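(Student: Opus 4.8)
The plan is to turn the statement into a geometric graph-colouring problem and then solve that problem in each of the three regimes of $\beta$. The bridge between ITLinQ and colouring is a two-step observation. \emph{Step 1 (time-sharing).} Suppose the $n$ links can be partitioned into $m$ subsets $\mathcal S_1,\dots,\mathcal S_m$, each of which is an ITIS. Then scheduling $\mathcal S_\ell$ during the $\ell$-th $\tfrac1m$-fraction of time (with point-to-point Gaussian codebooks and the TIN-optimal powers of \cite{tin}) is a legitimate ITLinQ schedule, and by Theorem~\ref{tin_thm} applied to the sub-network on $\mathcal S_\ell$ and then averaging over time, link $i\in\mathcal S_\ell$ attains rate at least $\tfrac1m\bigl(R_i-\log 3|\mathcal S_\ell|\bigr)\ge\lambda R_i-k$ for every rate tuple $(R_1,\dots,R_n)$ in the capacity region, with $\lambda=\tfrac1m$ and $k=\lambda\log 3n$; here one uses that deleting interferers can only enlarge a sub-network's capacity region, so the projection of $(R_i)$ onto the coordinates in $\mathcal S_\ell$ lies in that sub-network's capacity region. \emph{Step 2 (geometrisation).} By the sufficient condition in the Remark after Definition~\ref{itq}, the path-loss model, and $\text{SNR}_i\ge\tfrac{Pg_0}{N}r_n^{-\alpha}$ (as $d_{ii}\le r_n$), a set $\mathcal S$ is an ITIS whenever $\|\text{S}_i-\text{S}_j\|\ge\rho_n:=\gamma n^{-\beta/2}+r_0 n^{-\beta}$ for all distinct $i,j\in\mathcal S$ — this is exactly where $\gamma=\sqrt[2\alpha]{\tfrac{P}{N}g_0 r_0^\alpha}$ enters, through $\bigl(\tfrac{Pg_0}{N}\bigr)^{1/(2\alpha)}r_n^{1/2}=\gamma n^{-\beta/2}$ and the triangle inequality. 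So it suffices to colour the $n$ i.i.d.\ uniform points in $B(0,R)$ with as few colours as possible so that each colour class is a $\rho_n$-separated set, where $\rho_n\sim\gamma n^{-\beta/2}$; the minimum number of colours then upper-bounds $m$.

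For $0<\beta<1$ we are in the dense regime $n\rho_n^2\to\infty$, where a greedy colouring would cost $\sim\Delta+1\sim\tfrac{\gamma^2}{R^2}n^{1-\beta}$ colours and thus overshoot by the constant $\tfrac{2\pi}{\sqrt3}$; so I would use a packing-efficient colouring instead. Tile $B(0,R)$ with a regular hexagonal grid of circumradius $\eta_n$ chosen with $\eta_n=o(\rho_n)$ but $n\eta_n^2\to\infty$ (e.g.\ $\eta_n=\rho_n/\log n$); a Chernoff bound per hexagon, a union bound over the $\Theta(R^2/\eta_n^2)$ hexagons, and Borel--Cantelli show that almost surely, for all large $n$, no hexagon holds more than $M_n=(1+o(1))\,n\,\tfrac{3\sqrt3}{2}\eta_n^2/(\pi R^2)$ points. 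Colour each point by the pair (reuse-class of its hexagon under a standard frequency-reuse pattern with $K_n=\lceil(\rho_n+2\eta_n)^2/(3\eta_n^2)\rceil=(1+o(1))\rho_n^2/(3\eta_n^2)$ classes, rank within its hexagon); two equally-coloured points then lie in distinct hexagons whose centres are $\ge\rho_n+2\eta_n$ apart and are hence themselves $\ge\rho_n$ apart, so every colour class is an ITIS. The number of colours is $m=K_nM_n=(1+o(1))\tfrac{\sqrt3}{2\pi}\tfrac{n\rho_n^2}{R^2}=(1+o(1))\tfrac{\sqrt3\gamma^2}{2\pi R^2}n^{1-\beta}$ — the $\eta_n$'s cancel and only the hexagonal-packing constant $\tfrac{\sqrt3}{2\pi}$ survives. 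Substituting this $m$ into Step 1 gives $\lambda=\tfrac{2\pi R^2}{\sqrt3\gamma^2}n^{\beta-1}$ and $k\le\tfrac{2\pi R^2}{\sqrt3\gamma^2}\tfrac{\log 3n}{n^{1-\beta}}$ (up to $1+o(1)$ factors).

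For $\beta=1$ ($n\rho_n^2\to\gamma^2$, the thermodynamic regime) and $\beta>1$ ($n\rho_n^2\to0$), the conflict graph $G(n,\rho_n)$ is sparse and a greedy colouring already suffices: $\chi\le\Delta+1$, where $\Delta$ is the maximum number of other sources within $\rho_n$ of a source, and a first-moment estimate $E[\#\text{points with }\ge j\text{ others within }\rho_n]\approx\tfrac{\gamma^{2j}}{j!R^{2j}}n^{1-j(\beta-1)}$ controls $\Delta$. For $\beta=1$ this is the maximum of $\sim n$ essentially-Poisson$(\Theta(1))$ counts, so $\Delta+1=(1+o(1))\tfrac{\ln n}{\ln\ln n}$; taking $m=\Delta+1$, Step 1 gives $\lambda=\tfrac{\ln\ln n}{\ln n}$ and $k\le\log\ln n+\tfrac{(\log 3)\ln\ln n}{\ln n}$ (using $\tfrac{\ln\ln n}{\ln n}\log 3n=\log\ln n+\tfrac{(\log 3)\ln\ln n}{\ln n}$). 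For $\beta>1$ the conflict graph has almost surely bounded components, and estimating the largest one via the bound above — including the passage from whp to a.s.\ near the critical cluster size (a geometric-subsequence argument, since the relevant failure probabilities are then not summable termwise) — pins the number of colours to $m=\lfloor\tfrac1{\beta-1}+\tfrac12\rfloor+1$, whence $\lambda=\tfrac1m$ and $k\le\tfrac{\log 3n}{m}$.

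I expect the main obstacle to be the \emph{sharp constants} rather than the orders. For $0<\beta<1$ one must set up the hexagon size and reuse factor so that the densest-packing constant $\tfrac{\sqrt3}{2\pi}$ emerges cleanly and the free parameter $\eta_n$ cancels; a cruder colouring loses the factor $\tfrac{2\pi}{\sqrt3}$ here. For $\beta=1$ one needs the precise $\tfrac{\ln n}{\ln\ln n}$ extremal asymptotics of the local point counts (with leading constant exactly $1$) and that greedy attains this. For $\beta>1$ one must pin down the rounding in $\lfloor\tfrac1{\beta-1}+\tfrac12\rfloor+1$ and make the almost-sure claim rigorous even though the deviation probabilities near the critical cluster size decay only polynomially. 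By contrast, the time-sharing reduction of Step 1, the geometrisation of Step 2, and the Chernoff/union-bound concentration are routine.
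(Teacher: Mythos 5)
Your proposal is correct in its overall architecture and reaches the stated constants, but it takes a genuinely different route in the final (and hardest) step. Your Step 1 is exactly the paper's Lemma \ref{schfrac} (time-sharing over a family of ITISs covering all links, with the gap $\lambda\log 3n$ coming from Theorem \ref{tin_thm} applied to each subnetwork), and your Step 2 is exactly Lemma \ref{two_indep}: the sufficient condition $\text{INR}\le\sqrt{\text{SNR}}$ plus the triangle inequality turns ITIS-membership into $\rho_n$-separation of the sources with $\rho_n=\gamma n^{-\beta/2}+r_0n^{-\beta}$, so the problem reduces to colouring a random geometric graph. Where you diverge is in how that graph is coloured: the paper simply invokes the McDiarmid--M\"uller theorem on the chromatic number of random geometric graphs (their Theorem 1.1, quoted as Theorem \ref{th_rgg}), plugging $n\rho_n^2\sim\gamma^2n^{1-\beta}$ into its three regimes to read off $\frac{\sqrt3\gamma^2}{2\pi R^2}n^{1-\beta}$, $\frac{\ln n}{\ln\ln n}$, and the two-point concentration at $\lfloor\frac1{\beta-1}+\frac12\rfloor$, whereas you propose to construct the colourings directly (hexagonal tiling with frequency reuse plus Chernoff/Borel--Cantelli for $0<\beta<1$; greedy colouring against the maximum degree of essentially-Poisson local counts for $\beta=1$; a component-size analysis for $\beta>1$). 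Your route is more self-contained and only needs the upper-bound halves (which is all the theorem requires, since only an upper bound on $\kappa_n$ matters), and your constant computations check out, including the cancellation of the auxiliary hexagon scale and the identity $\frac{\ln\ln n}{\ln n}\log 3n=\log\ln n+\frac{(\log3)\ln\ln n}{\ln n}$; what it buys is independence from the black-box citation, at the cost of essentially re-proving the relevant parts of that theorem. The one place where this cost is material is the $\beta>1$ regime: you correctly flag that the a.s.\ eventual bound $\chi(G_n)\le\lfloor\frac1{\beta-1}+\frac12\rfloor+1$ near the critical cluster size is delicate because the failure probabilities decay only polynomially, and your sketch (first-moment bound on component sizes plus a subsequence argument) is exactly the nontrivial content of part (i) of the cited theorem; if you pursue the self-contained route you would need to carry that argument out in full, which the paper deliberately avoids.
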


\begin{remk}
The achievable fraction of the capacity region expressed in Theorem \ref{main_frac} is only a \emph{lower bound} on the fraction of the capacity region that ITLinQ is able to achieve, and therefore, ITLinQ can \emph{guarantee} the achievability of this fraction of the capacity region (to within a gap of $k$ bits).
\end{remk}

Figure \ref{thm_comp} illustrates the impact of the maximum source-destination distance decreasing rate on the fraction of the capacity region that can be achieved by ITLinQ.\footnote{Since the focus of the comparison is on the order of the fractions achievable by ITLinQ in different regimes of $\beta$, the parameters are chosen such that the constant $\frac{2\pi R^2}{\sqrt{3}\gamma^2}$ in Theorem \ref{main_frac} is equal to 1. Hence, any choice of the parameters for which $\frac{2\pi R^2}{\sqrt{3}\gamma^2}=1$ is a valid choice.} If the maximum source-destination distance is proportional to $n^{-\beta}$ such that $0<\beta <1$, then the ITLinQ scheme is capable of asymptotically achieving a fraction proportional to $\frac{1}{n^{1-\beta}}$ of the capacity region, within a vanishing gap. However, if the maximum source-destination distance scales as $n^{-1}$, then the achievable fraction of the capacity region decreases as $\frac{\ln \left({\ln n}\right)}{\ln n}$ which declines much slower than the previous case. In this case, the gap increases very slowly with respect to $n$. Finally, in the case that the maximum distance between each source and its corresponding destination scales faster than $n^{-1}$, we can achieve at least a \emph{constant} fraction of the capacity region for asymptotically large number of links which is a considerable improvement, whereas the gap is increasing with the number of links. This matches the natural intuition that the closer the destinations are located to their corresponding sources, the more the signal-to-interference-plus-noise ratio and the higher the fraction of the capacity that can be achieved by the ITLinQ scheme. Also, as a baseline, we have included the fraction of the capacity region that TDMA and independent set scheduling can achieve, which is $\frac{1}{n}$ for both schemes.\footnote{The achievable fraction of the capacity region by independent set scheduling was derived through numerical analysis.}

\begin{figure}[h]
\centering
\includegraphics[trim = 0.5in 2.8in 0.6in 2.9in, clip,width=0.45\textwidth]{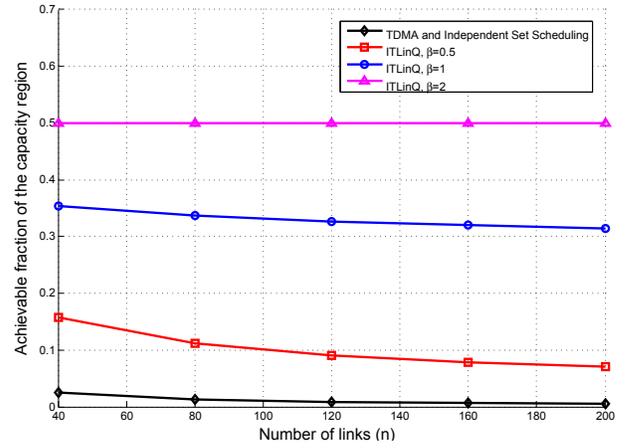}
\caption{Comparison of the guaranteed achievable fraction of capacity region by the ITLinQ scheme in different regimes with TDMA and independent set scheduling.}
\label{thm_comp}
\end{figure}


\begin{remk}
As an immediate application of Theorem \ref{main_frac}, one can consider the model in which all the $n$ source and the $n$ destination nodes are located uniformly and independently within a circular area of radius $R$, and each destination gets associated with its \emph{closest} source. For such a model, it is straightforward to show that ITLinQ can almost-surely achieve a fraction $\lambda= \frac{2\sqrt{3}\pi R^2}{3\gamma^2} n^{\beta-1}$ of the capacity region to within a gap of $k\leq\frac{2\sqrt{3}\pi R^2}{3\gamma^2}\frac{\log 3n}{n^{1-\beta}}$ for any $\beta<\frac{1}{2}$, when $n\rightarrow\infty$.
\end{remk}

\subsubsection{Proof of Theorem \ref{main_frac}}

In order to characterize the fraction of the capacity region that ITLinQ is able to achieve and prove Theorem \ref{main_frac}, we seek to find the minimum number of information-theoretic independent sets which cover all the links and we will then do time-sharing among these subsets. More precisely, if we denote the set of all the information-theoretic independent subsets of a network composed of $n$ source-destination pairs by $\mathcal{S}_n$, then we are interested in the minimum-cardinality subset of $\mathcal{S}_n$ whose members cover all the links; i.e., their union is equal to the set of all the links $\{1,...,n\}$. Denote such a subset by $\mathcal{S}^*_n$ and let $\kappa_n=|\mathcal{S}^*_n|$. We will show that time-sharing among these $\kappa_n$ information-theoretic independent sets can achieve the fractions of the capacity region mentioned in Theorem \ref{main_frac}. As the first step of the proof, we characterize the achievable fraction of the capacity region by the ITLinQ scheme and its gap with respect to the random variable $\kappa_n$ in the following lemma.

\begin{lem}\label{schfrac}
The ITLinQ scheme can achieve a fraction $\frac{1}{\kappa_n}$ of the capacity region of a network composed of $n$ source-destination pairs to within a gap of $\frac{\log 3n}{\kappa_n}$.
\end{lem}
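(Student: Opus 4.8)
The plan is to construct an explicit ITLinQ schedule from a minimum ITIS cover and then bound its rate region against that of the whole network by time-sharing. Fix a minimum-cardinality collection $\mathcal{S}^*_n=\{\mathcal{S}_1,\dots,\mathcal{S}_{\kappa_n}\}$ of information-theoretic independent sets whose union is $\{1,\dots,n\}$, and assign to each link $i$ one ``home'' set $\mathcal{S}_{\tau(i)}\ni i$ in the cover. Partition the time axis into $\kappa_n$ slots of equal length and, during slot $t$, let ITLinQ schedule exactly the sources in $\mathcal{S}_t$, each using a point-to-point Gaussian codebook with the power allocation that the TIN-optimality result prescribes for the sub-network on the links in $\mathcal{S}_t$, while every destination treats its incoming interference as noise; by Definition \ref{itq} this is a valid ITLinQ schedule, and since each source is active in at most a $1/\kappa_n$ fraction of the time its long-term average power stays below $P$.

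The key ingredient is a monotonicity property of capacity regions: for every $t$, the projection onto the coordinates of $\mathcal{S}_t$ of the capacity region of the full $n$-link network is contained in the capacity region $\mathcal{C}_{\mathcal{S}_t}$ of the sub-network consisting of only the links in $\mathcal{S}_t$. The reason is that silencing the out-of-set sources cannot hurt the in-set links: given any code that achieves a rate tuple $(R_1,\dots,R_n)$ in the full network, run the same codebooks for the links in $\mathcal{S}_t$ over the sub-network and let each in-set destination $j$ locally regenerate the missing term $\sum_{i\notin\mathcal{S}_t}h_{ji}X_i^l$ from the (publicly known) codebooks and freshly drawn i.i.d.\ uniform messages for the out-of-set links; because all messages are mutually independent, this makes the sub-network channel statistically identical to the original one as seen by the links in $\mathcal{S}_t$, so their decoding error probabilities, and hence their achievable rates, are unchanged, giving $\mathrm{proj}_{\mathcal{S}_t}(\mathcal{C}_n)\subseteq\mathcal{C}_{\mathcal{S}_t}$.

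With this in hand, take any $(R_1,\dots,R_n)$ in the capacity region of the full network. For each slot $t$, the restricted tuple $(R_i)_{i\in\mathcal{S}_t}$ lies in $\mathcal{C}_{\mathcal{S}_t}$, and since $\mathcal{S}_t$ is an ITIS, Theorem \ref{tin_thm} applied to the $|\mathcal{S}_t|$-link sub-network guarantees that TIN attains, for every $i\in\mathcal{S}_t$, a rate of at least $R_i-\log 3|\mathcal{S}_t|\ge R_i-\log 3n$ during slot $t$. Charging link $i$ only with the contribution of its home slot $\tau(i)$, which occupies a $1/\kappa_n$ fraction of the time, its long-term rate is at least $\tfrac{1}{\kappa_n}\bigl(R_i-\log 3n\bigr)=\tfrac{1}{\kappa_n}R_i-\tfrac{\log 3n}{\kappa_n}$. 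Since this holds simultaneously for all $i$, the ITLinQ schedule achieves the fraction $1/\kappa_n$ of the capacity region to within a gap of $(\log 3n)/\kappa_n$, as claimed.

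The step I expect to be the main obstacle is rigorously justifying the monotonicity claim of the second paragraph, i.e.\ that deleting the out-of-set interferers does not shrink the achievable rates of the remaining links; this rests on the independence of the messages and on the decoder's freedom to synthesize its own interference, and must be stated carefully so that Theorem \ref{tin_thm} can legitimately be invoked on each sub-network. The remaining pieces — the time-sharing arithmetic and the per-link power accounting — are routine.
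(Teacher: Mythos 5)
Your proof is correct and follows essentially the same route as the paper: time-sharing over the $\kappa_n$ sets of a minimum ITIS cover, invoking the constant-gap TIN result of Theorem \ref{tin_thm} on each set, and crediting each link with (at least) the one slot of length $1/\kappa_n$ in which it appears, with $\log 3|\mathcal{S}_t|\leq\log 3n$. The only difference is that you explicitly justify the projection step $\mathrm{proj}_{\mathcal{S}_t}(\mathcal{C}_n)\subseteq\mathcal{C}_{\mathcal{S}_t}$ via the interference-synthesis argument, a point the paper's proof takes for granted; your justification of it is sound.
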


\begin{proof}
Consider any rate tuple $(R_1,...,R_n)$ inside the capacity region of the network and consider any ITIS $\mathcal{U}\in \mathcal{S}^*_n$. From the result in \cite{tin}, since TIN is information-theoretically optimal in $\mathcal{U}$ (to within a constant gap), the rate tuple $(\bar{R}_{1,\mathcal{U}},...,\bar{R}_{n,\mathcal{U}})$ is achievable in the $\frac{1}{\kappa_n}$ fraction of time which is allocated to $\mathcal{U}$, where 
\begin{align*}
\bar{R}_{i,\mathcal{U}}=\begin{cases}
R_i-\log 3|\mathcal{U}| & i\in\mathcal{U} \\
0 & i\notin\mathcal{U}.
\end{cases}
\end{align*}

Therefore, the rate achieved by any link $i\in\{1,...,n\}$ in the network through the ITLinQ scheme, denoted by $R_{i,ITLinQ}$, can be lower bounded as
\begin{align*}
R_{i,ITLinQ}&=\frac{1}{\kappa_n} \sum_{\mathcal{U}\in\mathcal{S}^*_n} \bar{R}_{i,\mathcal{U}}\\
&=\frac{1}{\kappa_n} \sum_{\mathcal{U}\in\mathcal{S}^*_n: i\in\mathcal{U}} (R_i-\log 3|\mathcal{U}|)\\
&\geq \frac{1}{\kappa_n} (R_i-\log 3n)\numberthis\label{setcov}\\
&= \frac{1}{\kappa_n} R_i- \frac{\log 3n}{\kappa_n},
\end{align*}
where (\ref{setcov}) follows from the fact that the subsets in $\mathcal{S}^*_n$ cover all the links $\{1,...,n\}$ and that for every $\mathcal{U}\in\mathcal{S}^*_n$, we have $|\mathcal{U}|\leq n$. This completes the proof.
\end{proof}

Therefore, to find an achievable fraction of the capacity region by ITLinQ, we need to find an upper bound on $\kappa_n$, that is the minimum number of information-theoretic independent subsets which cover all of the links. One way to find such an upper bound is to restrict the TIN-optimality condition in (\ref{tin_opt}). In other words, we need to find another condition that implies condition (\ref{tin_opt}), but is more restricted and more tractable than (\ref{tin_opt}). Imposing such a restricted sufficient condition will reduce the number of information-theoretic independent subsets, hence leading to an upper bound on $\kappa_n$. To this end, we present Lemma \ref{two_indep}. In the following, we denote the distance between source $i$ and destination $j$ by $d_{\text{S}_i \text{D}_j}$ and the distance between sources $i$ and $j$ by $d_{\text{S}_i \text{S}_j}$, $\forall i,j$.

\begin{lem}\label{two_indep}
If in a network of $n$ source-destination pairs within the framework of the model in Section \ref{cap}, the distance between $\emph{S}_i$ and $\emph{S}_j$ satisfies $d_{\emph{S}_i \emph{S}_j}>\gamma n^{-\beta/2}+r_0 n^{-\beta}$, then 
\begin{align*}
\max((\emph{INR}_{ji})^2,(\emph{INR}_{ij})^2) <  \min(\emph{SNR}_i,\emph{SNR}_j).
\end{align*}
\end{lem}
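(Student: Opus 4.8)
The plan is to read $\text{SNR}$ and $\text{INR}$ directly off the path-loss model and reduce the claim to bounding four distances. Writing $\text{SNR}_i=\tfrac{Pg_0}{N}\,d_{\text{S}_i\text{D}_i}^{-\alpha}$, $\text{INR}_{ij}=\tfrac{Pg_0}{N}\,d_{\text{S}_j\text{D}_i}^{-\alpha}$, and $\text{INR}_{ji}=\tfrac{Pg_0}{N}\,d_{\text{S}_i\text{D}_j}^{-\alpha}$, it suffices to lower bound the two cross distances $d_{\text{S}_j\text{D}_i}$ and $d_{\text{S}_i\text{D}_j}$, upper bound the two direct distances $d_{\text{S}_i\text{D}_i}$ and $d_{\text{S}_j\text{D}_j}$, and compare.

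First I would invoke the model assumption that each destination lies within $r_0 n^{-\beta}$ of its own source: $d_{\text{S}_i\text{D}_i}\le r_0 n^{-\beta}$ and $d_{\text{S}_j\text{D}_j}\le r_0 n^{-\beta}$, whence (since $x\mapsto x^{-\alpha}$ is decreasing) $\text{SNR}_i,\text{SNR}_j\ge \tfrac{Pg_0}{N}\, r_0^{-\alpha}\, n^{\alpha\beta}$. Then, by the triangle inequality together with the hypothesis $d_{\text{S}_i\text{S}_j}>\gamma n^{-\beta/2}+r_0 n^{-\beta}$,
\[
d_{\text{S}_j\text{D}_i}\ge d_{\text{S}_i\text{S}_j}-d_{\text{S}_i\text{D}_i}>\gamma n^{-\beta/2},\qquad
d_{\text{S}_i\text{D}_j}\ge d_{\text{S}_i\text{S}_j}-d_{\text{S}_j\text{D}_j}>\gamma n^{-\beta/2},
\]
so $\text{INR}_{ij},\text{INR}_{ji}<\tfrac{Pg_0}{N}\gamma^{-\alpha}n^{\alpha\beta/2}$ and hence $(\text{INR}_{ij})^2,(\text{INR}_{ji})^2<\bigl(\tfrac{Pg_0}{N}\bigr)^2\gamma^{-2\alpha}n^{\alpha\beta}$.

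The last step is to use the precise definition $\gamma^{2\alpha}=\tfrac{P}{N}g_0 r_0^{\alpha}$, which is exactly calibrated so that $\bigl(\tfrac{Pg_0}{N}\bigr)^2\gamma^{-2\alpha}=\tfrac{Pg_0}{N}\, r_0^{-\alpha}$. Combining this with the two preceding bounds yields $\max\bigl((\text{INR}_{ji})^2,(\text{INR}_{ij})^2\bigr)<\tfrac{Pg_0}{N}\, r_0^{-\alpha}\, n^{\alpha\beta}\le\min(\text{SNR}_i,\text{SNR}_j)$, which is the claim.

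This is essentially a one-line calculation, so there is no genuinely hard step; the only points requiring care are (i) applying the triangle inequality separately to each of the two cross links — the offset of each destination from its own source is precisely what the extra $r_0 n^{-\beta}$ term in the hypothesis absorbs — and (ii) tracking the square on the $\text{INR}$, which is what forces the $n^{-\beta/2}$ scaling of the distance threshold and the $2\alpha$-th root in the definition of $\gamma$. It is also worth flagging why the lemma is the right tool downstream: combined with the sufficient condition from the earlier remark ($\text{INR}_{ij}\le\sqrt{\text{SNR}_i}$ and $\text{INR}_{ji}\le\sqrt{\text{SNR}_i}$ for all $j\in\mathcal{S}\setminus\{i\}$), it shows that any set of links whose sources are pairwise separated by more than $\gamma n^{-\beta/2}+r_0 n^{-\beta}$ forms an ITIS, turning the search for an upper bound on $\kappa_n$ into a covering/coloring problem on a random geometric graph.
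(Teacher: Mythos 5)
Your proposal is correct and follows essentially the same route as the paper: bound the direct distances by $r_0 n^{-\beta}$ from the model, lower bound the two cross distances via the triangle inequality to exceed $\gamma n^{-\beta/2}$, and then use the calibration $\gamma^{2\alpha}=\tfrac{P}{N}g_0 r_0^{\alpha}$ to show the squared INRs fall below both SNRs. The only cosmetic difference is that you bound both cross links explicitly and symmetrically in one pass, whereas the paper does one and appeals to symmetry for the other.
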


\begin{proof}
Based on the model considered in Section \ref{cap}, we know that $d_{\text{S}_i \text{D}_i}\leq r_0 n^{-\beta}$ and $d_{\text{S}_j \text{D}_j}\leq r_0 n^{-\beta}$. Moreover, from the triangle inequality, we will have $d_{\text{S}_i \text{D}_j}\geq d_{\text{S}_i \text{S}_j}-d_{\text{S}_j \text{D}_j}> \gamma n^{-\beta/2}$. Similarly, we have $d_{\text{S}_j \text{D}_i}>  \gamma n^{-\beta/2}$. Therefore, we can get
\begin{align}\label{snri}
\text{SNR}_i= \frac{P}{N} g_{0}{d_{\text{S}_i \text{D}_i}}^{-\alpha}\geq \frac{P}{N} g_{0}{\left(r_0 n^{-\beta}\right)}^{-\alpha}=\frac{P}{N} g_{0}{r_0}^{-\alpha}n^{\alpha\beta},
\end{align}
and
\begin{align*}
\text{INR}_{ji}&= \frac{P}{N} g_{0}{d_{\text{S}_i \text{D}_j}}^{-\alpha}\\
&< \frac{P}{N} g_{0}{\left(\gamma n^{-\beta/2}\right)}^{-\alpha}=\frac{P}{N} g_{0}{\gamma}^{-\alpha}n^{\alpha\beta/2}.\numberthis\label{inrji}
\end{align*}

Combining (\ref{snri}) and (\ref{inrji}), we will have 
\begin{align}\label{inrsq}
(\text{INR}_{ji})^2 < \left(\frac{P}{N} g_{0}\right)^2 \gamma^{-2\alpha}n^{\alpha\beta}=
\frac{P}{N} g_{0}{r_0}^{-\alpha}n^{\alpha\beta} \leq \text{SNR}_i,
\end{align}
and likewise, we can show that 
\begin{align}\label{inrsqq}
(\text{INR}_{ij})^2 <  \text{SNR}_i.
\end{align}

Combining (\ref{inrsq}) with (\ref{inrsqq}) yields $\max((\text{INR}_{ji})^2,(\text{INR}_{ij})^2) < \text{SNR}_i$. By symmetry, we will also have $\max((\text{INR}_{ji})^2,(\text{INR}_{ij})^2) < \text{SNR}_j$. This completes the proof.
\end{proof}

Consequently, Lemma \ref{two_indep} implies that there exists a threshold distance of $d_{th,n}=\gamma n^{-\beta/2}+r_0 n^{-\beta}$ such that if the distance between two sources is greater than this threshold, the corresponding pair of links are considered to be information-theoretically independent; i.e., the interference they cause on each other is at a sufficiently low level that it is information-theoretically optimal to treat it as noise (to within a constant gap).

Therefore, given an network of $n$ source-destination pairs with nodes spread as mentioned in the model in the beginning of Section \ref{cap}, we can build a corresponding undirected graph $G_n=(V_n,E_n)$ where $V_n=\{1,...,n\}$ is the set of vertices and $(i,j)\in E_n$ if and only if $d_{\text{S}_i \text{S}_j}\leq d_{th,n}$; i.e., two nodes are connected together if and only if the distance between their sources is no larger than the threshold distance $d_{th,n}$. We call the resultant graph $G_n$ the \emph{information-theoretic conflict graph} of the original network. Clearly, this graph is a random geometric graph \cite{RGG}.

To return to our original problem, note that we needed to find an upper bound on $\kappa_n$. The following lemma provides such an upper bound.

\begin{lem}\label{chrom}
$\kappa_n\leq \chi(G_n)$, where $\chi(.)$ denotes the chromatic number.
\end{lem}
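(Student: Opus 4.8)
The plan is to show that any proper coloring of the information-theoretic conflict graph $G_n$ yields a valid cover of the links by information-theoretic independent sets, which immediately bounds $\kappa_n$ (the minimum number of ITIS needed to cover all links) by the chromatic number $\chi(G_n)$. First I would take an optimal proper coloring of $G_n$ using $\chi(G_n)$ colors, and let $\mathcal{C}_1, \dots, \mathcal{C}_{\chi(G_n)}$ denote the resulting color classes. By definition of a proper coloring, no two vertices in the same color class $\mathcal{C}_m$ are adjacent in $G_n$; that is, for any $i,j \in \mathcal{C}_m$ with $i \neq j$, we have $d_{\text{S}_i \text{S}_j} > d_{th,n} = \gamma n^{-\beta/2} + r_0 n^{-\beta}$.

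The key step is then to verify that each color class $\mathcal{C}_m$ is an ITIS. Fix a class $\mathcal{C}_m$ and a link $i \in \mathcal{C}_m$. For every other link $j \in \mathcal{C}_m \setminus \{i\}$, the source separation condition $d_{\text{S}_i \text{S}_j} > d_{th,n}$ lets us invoke Lemma \ref{two_indep}, which gives $\max((\text{INR}_{ji})^2, (\text{INR}_{ij})^2) < \min(\text{SNR}_i, \text{SNR}_j) \leq \text{SNR}_i$. In particular this yields $\text{INR}_{ij} < \sqrt{\text{SNR}_i}$ and $\text{INR}_{ji} < \sqrt{\text{SNR}_i}$ for all $j \in \mathcal{C}_m \setminus \{i\}$. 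By the sufficient condition noted in the remark following Definition \ref{itq} (namely that $\text{INR}_{ij} \leq \sqrt{\text{SNR}_i}$ and $\text{INR}_{ji} \leq \sqrt{\text{SNR}_i}$ for all $j \in \mathcal{S}\setminus\{i\}$ implies the ITIS condition \eqref{itis_cond}), it follows that $\mathcal{C}_m$ satisfies \eqref{itis_cond}, hence $\mathcal{C}_m$ is an ITIS, i.e., $\mathcal{C}_m \in \mathcal{S}_n$.

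Finally, since the color classes partition $V_n = \{1, \dots, n\}$, their union is all of $\{1,\dots,n\}$, so $\{\mathcal{C}_1, \dots, \mathcal{C}_{\chi(G_n)}\}$ is a collection of $\chi(G_n)$ information-theoretic independent sets covering all the links. By the minimality of $\kappa_n$ over all such covers, we conclude $\kappa_n \leq \chi(G_n)$. I do not expect any serious obstacle here: the only substantive ingredient is checking that the pairwise source-separation guaranteed by proper coloring translates into the (stronger, square-root) sufficient form of the ITIS condition via Lemma \ref{two_indep}, and the rest is a bookkeeping argument about set covers versus graph colorings.
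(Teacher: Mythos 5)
Your proof is correct and follows essentially the same route as the paper: take a proper coloring of $G_n$, note that within each color class all pairwise source distances exceed $d_{th,n}$, apply Lemma \ref{two_indep} to verify the ITIS condition for each class, and conclude by minimality of $\kappa_n$. Your explicit passage through the square-root sufficient condition $\text{INR}_{ij},\text{INR}_{ji}\leq\sqrt{\text{SNR}_i}$ merely makes precise the step the paper handles with ``extending this argument.''
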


\begin{proof}
The chromatic number of $G_n$ is the smallest number of colors that can be assigned to all of the nodes of $G_n$ such that no two adjacent nodes have the same color. Therefore, considering the subsets of the links which receive the same color, $\chi(G_n)$ is the minimum number of subsets of the links which cover all the links and each of which consist of links whose sources have distance larger than $d_{th,n}$. From Lemma \ref{two_indep}, it is easy to show that if for three distinct links $i,j,k$, all the pairwise source distances are larger than $d_{th,n}$, then we will have that all the squared INR's within the subnetwork consisting of links $\{i,j,k\}$ are less than all the SNR's. Extending this argument, we can see that all the independent subsets of $G_n$ automatically satisfy the TIN-optimality condition of (\ref{tin_opt}) and hence are also information-theoretic independent subsets. Therefore, $\kappa_n$, which denotes the minimum number of information-theoretic independent subsets that cover all the links, can be no more than $\chi(G_n)$, the chromatic number of $G_n$.
\end{proof}

Thus the final step is to characterize the asymptotic distribution of $\chi(G_n)$. In this step, we will use parts (i), (ii) and (iv) of Theorem 1.1 in \cite{RGG} which we bring here for the sake of completeness. Consider a positive integer $d$ and a norm $\|.\|$ on $\mathbb{R}^d$. Suppose we have $n$ points $x_1,...,x_n$ in $\mathbb{R}^d$ and a threshold distance $r$, where $\lim_{n\rightarrow\infty} r=0$. Then, the random geometric graph $G_n$ is defined as a graph with vertex set $\{1,...,n\}$ in which vertices $i$ and $j$ are adjacent if and only if the $\|x_i-x_j\|\leq r$. Defining $f(n)\ll g(n)$ and $f(n)\sim g(n)$ to be equivalent to $\lim_{n\rightarrow\infty} \frac{f(n)}{g(n)}=0$ and $\lim_{n\rightarrow\infty} \frac{f(n)}{g(n)}=1$, respectively, we have the following theorem.

\begin{theo}[\hspace*{-.05in}\cite{RGG}]\label{th_rgg}
For the random geometric graph $G_n$, the following hold.
\begin{itemize}
\item[(i)] If $nr^d\leq n^{-\alpha}$ for some fixed $\alpha>0$, then
\begin{align*}
\hspace*{-.1in}\mathbb{P}\bigg(\chi(G_n)&\in\bigg\{\left\lfloor\left|\frac{\ln n}{\ln(nr^d)}\right|+\frac{1}{2}\right\rfloor,\left\lfloor\left|\frac{\ln n}{\ln(nr^d)}\right|+\frac{1}{2}\right\rfloor+1\bigg\}\\
&\qquad\qquad\text{for all but finitely many $n$}\bigg)=1.
\end{align*}
\item[(ii)] If $n^{-\epsilon}\ll nr^d\ll \ln n$ for all $\epsilon>0$, then
\begin{align*}
\chi(G_n)\sim \ln n / \ln \left(\frac{\ln n}{nr^d}\right) ~~a.s.
\end{align*}
\item[(iii)] If $nr^d\gg \ln n$ (but still $r\rightarrow 0$), then
\begin{align*}
\chi(G_n)\sim \frac{\emph{vol}(B)}{2^d \delta}\sigma nr^d ~~ a.s.,
\end{align*}
where $B$ is the unit ball in $\mathbb{R}^d$, $\sigma$ is the ``maximum density'' of the distribution of nodes in $\mathbb{R}^d$ and $\delta$ is the ``packing density'', defined in \cite{RGG}.
\end{itemize}
\end{theo}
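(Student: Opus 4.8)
The three parts correspond to three regimes of the expected degree, which is of order $nr^d$ (the expected number of points in a ball of radius $r$ scales like $nr^d$ up to a norm-dependent constant). In every regime the plan is to sandwich $\chi(G_n)$ between the clique number $\omega(G_n)$, a free lower bound since any clique needs pairwise-distinct colors, and an upper bound produced by an explicit coloring. The qualitative split is that in the sparse regimes (i) and (ii) the graph is locally so sparse that $\chi(G_n)$ is forced to coincide with $\omega(G_n)$ (up to an additive slack in (i) and a $1+o(1)$ factor in (ii)), reducing everything to the clique number; whereas in the dense regime (iii) cliques are of smaller order than the answer, and the value is instead governed by the local point density together with the optimal packing of balls.

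For $\omega(G_n)$ in regimes (i) and (ii), I would apply the method of moments to the number $C_k$ of $k$-cliques. Since $k$ points form a clique exactly when they are pairwise within distance $r$, a geometric estimate gives $\mathbb{E}[C_k]$ of order $\tfrac{1}{k!}\,n\,(c\,nr^d)^{k-1}$ for a constant $c=c(d,\|\cdot\|)$. Taking logarithms, $\ln\mathbb{E}[C_k]\approx \ln n-(k-1)\ln\big(1/(nr^d)\big)$, where in regime (i) the Stirling term $\ln(k!)$ is negligible because $nr^d\le n^{-\alpha}$ forces $k=O(1)$, while in regime (ii) it is essential and effectively replaces $1/(nr^d)$ by $(\ln n)/(nr^d)$ inside the logarithm. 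A first-moment bound eliminates all cliques above the threshold ($\mathbb{E}[C_k]\to0$ implies $C_k=0$ a.a.s.), and a second-moment (or Poisson-approximation) argument gives $C_k>0$ a.a.s. just below it; this pins $\omega(G_n)$ to the two integers around $\lfloor \ln n/|\ln(nr^d)|+\tfrac{1}{2}\rfloor$ in (i) and gives $\omega(G_n)\sim \ln n/\ln\big((\ln n)/(nr^d)\big)$ in (ii). To close these parts I would show $\chi(G_n)\le(1+o(1))\,\omega(G_n)$ by bounding the probability that any ball of radius $O(r)$ holds more than $\omega(G_n)$ points and coloring greedily, using that each vertex's neighborhood sits inside such a ball; local sparsity then forces $\chi=\omega$ on two values in (i) and $\chi\sim\omega$ in (ii).

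For regime (iii) the governing principle changes: every color class is an independent set, i.e.\ a family of points with pairwise distances exceeding $r$, equivalently a packing of disjoint balls of radius $r/2$. In the region of maximal density $\sigma$, a window of volume $V$ contains about $\sigma n V$ points, while by definition of the packing density $\delta$ at most about $\delta V\,2^d/(\mathrm{vol}(B)\,r^d)$ of them can share one color; dividing yields the lower bound $\chi(G_n)\gtrsim \tfrac{\mathrm{vol}(B)}{2^d\delta}\,\sigma nr^d$. The matching upper bound comes from partitioning space into cells of diameter $o(1)$ that still contain many points, coloring each cell by a near-optimal packing-based assignment, and invoking concentration of the point counts so that the local density is essentially constant per cell and the packing bound is asymptotically attained. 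Together these give $\chi(G_n)\sim \tfrac{\mathrm{vol}(B)}{2^d\delta}\,\sigma nr^d$.

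Finally, all three statements are almost-sure, so after establishing the corresponding in-probability estimates I would upgrade them using a bounded-differences (Azuma--McDiarmid) concentration inequality for $\chi(G_n)$ as a function of the point configuration, combined with Borel--Cantelli and, if convenient, Poissonization/de-Poissonization of the point process. I expect the main obstacle in (i)--(ii) to be the upper bound $\chi\le(1+o(1))\omega$: the bound $\chi\ge\omega$ is immediate, but proving the graph colors as efficiently as its largest clique requires tight control of the local geometry and of component sizes, and in (i) the second-moment step must be sharp enough to separate the two candidate integers. In (iii) the main obstacle is obtaining the exact constant $\tfrac{\mathrm{vol}(B)}{2^d\delta}\sigma$ rather than merely its order, since this demands matching a geometric lower bound built from an optimal $r/2$-ball packing in the densest region against a coloring whose inefficiency is asymptotically negligible; the appearance of the packing density $\delta$ is precisely what makes this constant delicate.
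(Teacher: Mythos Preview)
The paper does not prove this theorem at all: it is quoted verbatim as parts (i), (ii) and (iv) of Theorem~1.1 of McDiarmid and M\"uller~\cite{RGG} and used as a black box in the proof of Lemma~\ref{chrbounds}. There is therefore no ``paper's own proof'' to compare your proposal against; the intended answer here is simply to cite~\cite{RGG}.

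That said, your sketch is a reasonable high-level outline of the McDiarmid--M\"uller argument itself: in the sparse regimes one does control $\omega(G_n)$ by first/second-moment computations on clique counts and then argues $\chi=\omega$ (or $\chi\sim\omega$), and in the dense regime the constant comes from matching a packing lower bound against a cell-based coloring. If you intend to actually reprove the result rather than cite it, be aware that the hardest step is the one you flag only in passing: in regime~(i), getting $\chi(G_n)\le\omega(G_n)$ exactly (not just up to $1+o(1)$) so that the two-point concentration holds. A greedy argument based on ``no ball of radius $O(r)$ holds more than $\omega(G_n)$ points'' yields at best $\chi\le\omega+O(1)$, not $\chi\in\{\omega,\omega+1\}$; McDiarmid and M\"uller need a more delicate structural argument to close this gap. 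For the purposes of the present paper, however, none of this is required.
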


Using Theorem \ref{th_rgg}, the following lemma characterizes the asymptotic behavior of the chromatic number of the information-theoretic conflict graph $G_n$.

\begin{lem}\label{chrbounds}
For the information-theoretic conflict graph $G_n$, $\chi(G_n)$ exhibits the following behavior as $n\rightarrow\infty$:
\begin{itemize}
\item If $0<\beta<1$, then $\frac{\chi(G_n)}{n^{1-\beta}}\overset{\text{a.s.}}{\longrightarrow} \frac{\sqrt{3}}{2\pi R^2}\gamma^2$.

\item If $\beta=1$, then $\frac{\chi(G_n)}{{{\ln n}}/{{\ln \left({\ln n}\right)}}}\overset{\text{a.s.}}{\longrightarrow} 1$.

\item If $\beta>1$, then
\begin{align*}
\mathbb{P}\bigg(&\chi(G_n)  {\longrightarrow} \left\lfloor  \tfrac{1}{\beta-1}+\tfrac{1}{2}\right\rfloor \\
&\text{ or } \chi(G_n)  {\longrightarrow} \left\lfloor  \tfrac{1}{\beta-1}+\tfrac{1}{2}\right\rfloor+1\bigg)=1.
\end{align*}
\end{itemize}

\end{lem}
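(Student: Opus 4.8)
The plan is to observe that $G_n$ is exactly a random geometric graph in $\mathbb{R}^d$ with $d=2$: its $n$ vertices are the source locations, which are i.i.d.\ uniform on the disk of radius $R$, and $i\sim j$ iff $\|x_i-x_j\|\le r_n$ with connection radius $r_n=d_{th,n}=\gamma n^{-\beta/2}+r_0 n^{-\beta}$. I would then invoke the appropriate part of Theorem~\ref{th_rgg} in each regime of $\beta$. The common first step is to pin down the leading behaviour of the driving quantity $n r_n^{\,d}=n r_n^2$. Since $\beta>0$ makes $n^{-\beta}=o(n^{-\beta/2})$, we have $r_n\sim\gamma n^{-\beta/2}$ and $r_n\to 0$; expanding,
\[
n r_n^2=\gamma^2 n^{1-\beta}+2\gamma r_0\, n^{1-3\beta/2}+r_0^2\, n^{1-2\beta},
\]
and since $3\beta/2>\beta$ the last two terms are $o(n^{1-\beta})$, so $n r_n^2\sim\gamma^2 n^{1-\beta}$.

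For $0<\beta<1$: then $n r_n^2\sim\gamma^2 n^{1-\beta}$ grows polynomially, so $n r_n^2\gg\ln n$ while $r_n\to 0$, and part (iii) of Theorem~\ref{th_rgg} applies. With $B$ the unit disk, $\mathrm{vol}(B)=\pi$; the maximum density of the uniform law on the disk of radius $R$ is $\sigma=1/(\pi R^2)$; the planar packing density is $\delta=\pi/\sqrt{12}$. Then $\frac{\mathrm{vol}(B)}{2^d\delta}\,\sigma=\frac{\pi}{4\,(\pi/\sqrt{12})}\cdot\frac{1}{\pi R^2}=\frac{\sqrt{3}}{2\pi R^2}$, hence almost surely $\chi(G_n)\sim\frac{\sqrt{3}}{2\pi R^2}\,n r_n^2\sim\frac{\sqrt{3}\,\gamma^2}{2\pi R^2}\,n^{1-\beta}$, which after dividing by $n^{1-\beta}$ is the claimed limit. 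For $\beta=1$: the expansion gives $n r_n^2\to\gamma^2$, a fixed positive constant, so for every $\epsilon>0$ we have $n^{-\epsilon}\ll n r_n^2\ll\ln n$ and part (ii) applies: $\chi(G_n)\sim\ln n\,/\,\ln\!\big(\ln n/(n r_n^2)\big)$ a.s.; and since $n r_n^2\to\gamma^2$, $\ln\!\big(\ln n/(n r_n^2)\big)=\ln\ln n-\ln(n r_n^2)\sim\ln\ln n$, giving $\chi(G_n)\sim\ln n/\ln\ln n$. For $\beta>1$: $n r_n^2\sim\gamma^2 n^{1-\beta}$ with $1-\beta<0$, so for any fixed $a\in(0,\beta-1)$ we get $n r_n^2\le n^{-a}$ for all large $n$, and part (i) applies; from $\ln(n r_n^2)=(1-\beta)\ln n+\ln\gamma^2+o(1)$ we obtain $\big|\ln n/\ln(n r_n^2)\big|\to 1/(\beta-1)$, so $\big\lfloor\,|\ln n/\ln(n r_n^2)|+\tfrac12\,\big\rfloor$ is eventually constant, equal to $\big\lfloor\tfrac{1}{\beta-1}+\tfrac12\big\rfloor$, and $\chi(G_n)$ is therefore eventually confined to the two stated consecutive integers a.s.

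I expect the heart of the argument to be bookkeeping rather than a single hard step: identifying the correct random-geometric-graph constants $\sigma=1/(\pi R^2)$ and $\delta=\pi/\sqrt{12}$ so that the constant in part (iii) reproduces $\tfrac{\sqrt{3}}{2\pi R^2}\gamma^2$ exactly, and verifying that each range of $\beta$ indeed falls into the advertised case of Theorem~\ref{th_rgg}. The most delicate point is the boundary $\beta=1$, where $n r_n^2$ neither vanishes nor diverges but tends to the constant $\gamma^2$ --- precisely the ``sparse but not too sparse'' window governed by part (ii) --- which is why the subleading term $r_0 n^{-\beta}$ in $r_n$ must be carried along rather than discarded; the only other place requiring a moment of care is the (non-generic) possibility that $\tfrac{1}{\beta-1}+\tfrac12$ is an integer in the $\beta>1$ regime, which is settled by looking at the direction of convergence via the sign of $\ln\gamma^2$.
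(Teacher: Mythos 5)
Your proposal is correct and follows essentially the same route as the paper: identify $G_n$ as a planar random geometric graph with connection radius $d_{th,n}$, compute the leading behaviour of $n\,d_{th,n}^2$, and invoke parts (iii), (ii), (i) of Theorem~\ref{th_rgg} for $0<\beta<1$, $\beta=1$, $\beta>1$ respectively, with the same constants $\sigma=1/(\pi R^2)$, $\delta=\pi/(2\sqrt{3})$, $\mathrm{vol}(B)=\pi$. If anything you are slightly more careful than the paper (you keep the cross term $2\gamma r_0 n^{1-3\beta/2}$ in the expansion and flag the non-generic case where $\tfrac{1}{\beta-1}+\tfrac12$ is an integer), but the argument is the same.
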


\begin{proof}
Since the information-theoretic conflict graph $G_n$ is a random geometric graph with threshold distance $d_{th,n}=\gamma n^{-\beta/2}+r_0 n^{-\beta}$ and the nodes are distributed in $\mathbb{R}^2$, we can  make use of Theorem \ref{th_rgg}. We will have the following cases:

\begin{itemize}
\item If $0<\beta<1$, then $nd_{th,n}^2=\gamma^2 n^{1-\beta}+r_0^2 n^{1-2\beta}\gg \ln n$, and therefore we can use part (iii) of Theorem \ref{th_rgg}. Note that the dominant term in $\gamma^2 n^{1-\beta}+r_0^2 n^{1-2\beta}$ is the first term, since $\beta>0$. Also, as mentioned in \cite{RGG}, for the case of Euclidean norm in $\mathbb{R}^2$, we have $\delta=\frac{\pi}{2\sqrt{3}}$ and $\text{vol}(B)=\pi$. Also, since the distribution of the nodes is uniform on a circle of radius $R$, we have $\sigma=\frac{1}{\pi R^2}$. Therefore, we can get $\frac{\chi(G_n)}{n^{1-\beta}}\overset{\text{a.s.}}{\longrightarrow} \frac{\sqrt{3}}{2\pi R^2}\gamma^2$.

\item If $\beta=1$, then $nd_{th,n}^2=\gamma^2+r_0^2 n^{1-2\beta}$ which converges to a constant asymptotically, since $1-2\beta<0$. This enables us to use part (ii) of Theorem \ref{th_rgg}, since $n^{-\epsilon}\ll\gamma^2+r_0^2 n^{1-2\beta}\ll \ln n$ for all $\epsilon>0$. Therefore, we have $\frac{\chi(G_n)}{{{\ln n}}/{{\ln \left({\ln n}\right)}}}\overset{\text{a.s.}}{\longrightarrow} 1$.

\item If $\beta>1$, then $nd_{th,n}^2=\gamma^2 n^{-(\beta-1)}+r_0^2 n^{-(2\beta-1)}$, where $2\beta-1>\beta-1>0$. Thus, we can make use of part (i) of Theorem \ref{th_rgg} to get
\begin{align*}
\mathbb{P}\bigg(&\chi(G_n)  {\longrightarrow} \left\lfloor  \tfrac{1}{\beta-1}+\tfrac{1}{2}\right\rfloor \\
&\text{ or } \chi(G_n)  {\longrightarrow} \left\lfloor  \tfrac{1}{\beta-1}+\tfrac{1}{2}\right\rfloor+1\bigg)=1.
\end{align*}
\end{itemize}
\end{proof}

The proof of Theorem \ref{main_frac} then follows immediately from Lemmas \ref{schfrac}, \ref{chrom} and \ref{chrbounds} and also the fact that the continuous function $f(x)=\frac{1}{x}$ preserves almost-sure convergence (continuous mapping theorem \cite{as}).






\subsubsection{Impact of Rayleigh Fading on the Capacity Analysis}\label{fading}

One of the most important phenomena in wireless networks is the concept of channel fading. Even though fading seems to be a detrimental aspect of wireless networks, it can also be helpful if it is viewed in a careful way. Probably the most well-known example for this is receive diversity at multi-antenna receivers, where we can make use of independently faded signals to combine them in the best way, leading to an improvement in the received SNR.

Hence, it would be interesting to figure out how fading can affect the results we derived so far on the fraction of the capacity region that ITLinQ can achieve. In this section, we focus on this problem, considering the same model for the spatial location of the nodes as in Section \ref{cap} with the difference that here, we consider the squared magnitude of the channel gain at distance $r$ to be $g_0 r^{-\alpha}$ where $g_0$ represents the Rayleigh fade of the channel modeled as an exponential random variable with normalized mean of 1. We consider a slow fading scenario (i.e., block fading), where the rate of change of the channel characteristics is much smaller than the rate of change of the transmitted signal. Hence, the channel fade $g_0$ remains fixed during the transmission within each block of communication (which corresponds to a scheduling phase of ITLinQ) and changes i.i.d from one block to the next.

The definition of information-theoretic independent set (ITIS) still remains the same as before, i.e., within each block of communication with revisited channel gain values (modeled as $g_0r^{-\alpha}$), a subset $\mathcal{S}\subseteq\{1,...,n\}$ in which for any link $i\in\mathcal{S}$ condition (\ref{itis_cond}) is satisfied is an ITIS in that block. Obviously, introducing Rayleigh fading into the channel model adds another source of randomness in the analysis of the fraction of the capacity region achieved by ITLinQ, which is due to the dependence of ITIS's on the random fade of the channels.

 
However, we can still make use of Lemma \ref{schfrac} to characterize the fraction of the capacity region that ITLinQ can achieve in each block of communication when Rayleigh fading is also included in the channel model. In this case, the faded interference may no longer be Gaussian, but we can circumvent this issue due to the recent result in \cite{worstnoise}. In \cite{worstnoise}, the authors show that in a multi-user network, Gaussian noise is the worst-case additive noise in the sense that any rate tuple that can be achieved under the assumption of Gaussian noise can also be achieved under non-Gaussian additive noise of the same variance. 

Therefore, by treating the aggregate (non-Gaussian) noise plus faded interference at each destination as a Gaussian noise, we achieve a lower bound on the achievable rate of ITLinQ. As a result, Lemma \ref{schfrac} would still hold in a fading scenario, meaning that in each block of communication ITLinQ can achieve a fraction $\frac{1}{\kappa_n}$ of the capacity region to within a gap of $\frac{\log 3n}{\kappa_n}$, where now $\kappa_n$, the minimum number of ITIS's whose union contains all the links, depends both on the spatial location of the links and the realization of the fading. Characterizing the distribution of $\kappa_n$ in the fading scenario (even in the asymptote of $n \rightarrow \infty$) is quite challenging, hence we will use numerical evaluation in the rest of this section to analyze the average fraction of the capacity that ITLinQ achieves in the fading scenario (i.e., $\mathbb{E}\left[\frac{1}{\kappa_n}\right]$).

We consider the same network model of Section \ref{cap} for the placement of the nodes (in which source nodes are distributed uniformly within a circle of radius $R$ and each destination node is located within a distance $r_0 n^{-\beta}$ of its corresponding source node) and we evaluate the average fraction of the capacity region that ITLinQ is able to achieve (to within a gap) for both cases of with and without Rayleigh fading. The result is illustrated is Figure \ref{fracs_fading}.
\begin{figure}[h]
\centering
\includegraphics[trim = 0in 2.3in 0.2in 2.4in, clip,width=0.48\textwidth]{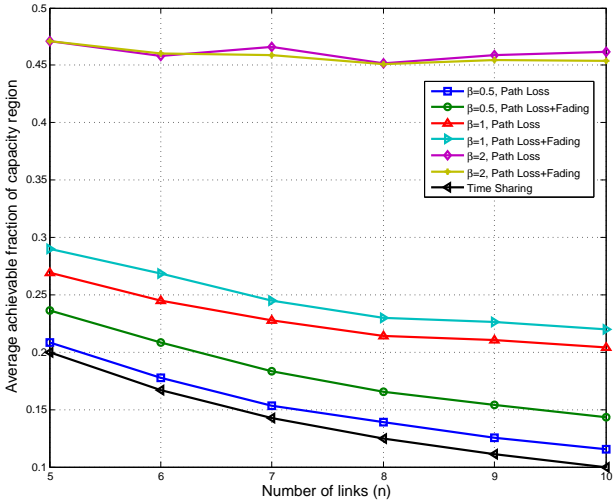}
\caption{Average achievable fraction of the capacity region by ITLinQ scheme with and without fading and comparison with time-sharing.}
\label{fracs_fading}
\end{figure}

By considering Figure \ref{fracs_fading}, we can now compare the average fraction of the capacity region that ITLinQ achieves in the fading and non-fading scenario (both with the same average channel gains). We interestingly note that there is an improvement in the case where Rayleigh fading is also included (in particular, for $\beta \leq 1$). The intuition behind this improvement can be explained as follows. Consider the ITIS condition (\ref{itis_cond}) rewritten in the following form
\begin{align}\label{fdg_cond}
g_{ii} d_{\text{S}_i \text{D}_i}^{-\alpha} \geq \frac{P}{N} \max_{j\in\mathcal{S}\setminus \{i\}} g_{ij} d_{\text{S}_j \text{D}_i}^{-\alpha} \max_{k\in\mathcal{S}\setminus \{i\}} g_{ki} d_{\text{S}_i \text{D}_k}^{-\alpha},
\end{align}
where $\forall i,j$, $g_{ij}$ is the exponential fading random variable of the channel between source $j$ and destination $i$. For fixed power and noise levels and spatial distribution of the nodes in the network, condition (\ref{fdg_cond}) reveals the opportunity that fading is providing in this case. In fact, there are specific locations of nodes in the network for which condition (\ref{fdg_cond}) cannot be satisfied in a deterministic path-loss setting. However, our numerical results show that the randomness due to the inclusion of Rayleigh fading can help this condition to be satisfied for more subsets of the links, resulting in an improvement in the achievable fraction of the capacity region. This can, therefore, be viewed as another case where fading is helpful in terms of the system performance. For the case of $\beta>1$, since the destination nodes get very close to their corresponding source nodes, interference is already at a very low level and therefore fading cannot be of much help and may even degrade the performance by a small amount, as depicted in Figure \ref{fracs_fading}.

Finally, using Lemma 1, we can also quantify the gap to the fraction of the capacity region that ITLinQ is able to achieve in each block of communication to be $\frac{\log 3n}{\kappa_n}$. For the above network model, we numerically evaluate and plot the average gap (i.e., $\mathbb{E}\left[\frac{\log 3n}{\kappa_n}\right]$)in Figure \ref{gaps}.
\begin{figure}[h]
\centering
\includegraphics[trim = 0in 2.3in 0.2in 2.4in,clip,width=0.48\textwidth]{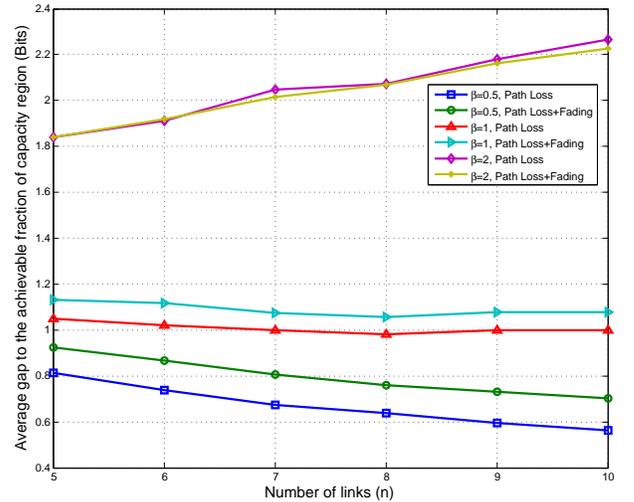}
\caption{Comparison of the average gap to the achievable fraction of the capacity region by ITLinQ with and without fading.}
\label{gaps}
\end{figure}
Similar to the non-fading scenario of Section \ref{cap}, we note that, for the case of $\beta \leq 1$, the average gap again does not scale with the size of the network and always remains less than 1.2 bits, independent of the number of links. However, for the case of $\beta>1$ where a constant fraction of the capacity region can be achieved asymptotically, the gap increases with the number of links and Theorem \ref{main_frac} predicts that the increase is logarithmic with respect to $n$.

\section{A Distributed Method for Implemenation of ITLinQ}\label{decent}

In this section, we present a distributed algorithm for putting the ITLinQ scheme into practice in real-world networks. The algorithm is inspired by the FlashLinQ distributed algorithm \cite{FLQ} and its complexity is exactly at the same level as the FlashLinQ algorithm. However, as we will demonstrate through numerical analysis in Section \ref{comp}, it significantly outperforms FlashLinQ in a certain network scenario.

As mentioned in Section \ref{defs}, we consider wireless networks consisting of $n$ source-destination pairs. In each execution of the algorithm, to address the issue of fairness among the links, we first permute the links randomly and reindex them from 1 to $n$ based on the realization of the random permutation, as also done in \cite{FLQ}. This new indexing of the links corresponds to a priority order of the links: link $i$ has higher priority than link $j$ if $i<j$, $\forall i,j\in\{1,...,n\}$. Then, link 1 is always scheduled to transmit at the current time frame and for the remaining links, each link is scheduled if it does not cause and receive ``too much'' interference to and from the higher priority links. The conditions for defining the level of ``too much'' interference for link $j\in\{2,...n\}$ are as follows, where $\eta$ is a design parameter:

\begin{itemize}
\item At $\text{D}_j$, the following conditions must be satisfied:
\begin{align}\label{itqdist1}
\text{INR}_{ji}\leq \text{SNR}^\eta_j,~\forall i<j,
\end{align}
which imply that $\text{D}_j$ does not receive too much interference from higher-priority links.

\item At $\text{S}_j$, the following conditions must be satisfied:
\begin{align}\label{itqdist2}
\text{INR}_{ij}\leq \text{SNR}^\eta_j,~\forall i<j,
\end{align}
which imply that $\text{S}_j$ does not cause too much interference at higher-priority links.
\end{itemize}

As it is clear, there are two major differences here with respect to the FlashLinQ scheduling conditions: The first difference is that instead of considering the raw fraction $\text{SIR}=\frac{\text{SNR}}{\text{INR}}$, here we are considering an exponent for the SNR term, which is completely inspired by the condition for the optimality of TIN (\ref{tin_opt}). The second difference is that in condition (\ref{itqdist2}), the outgoing interference of each link is compared to \emph{its own} SNR rather than other links' SNR's. This is also inspired by the TIN-optimality condition (\ref{tin_opt}).

In fact, if the parameter $\eta$ is set to $\eta=0.5$, then conditions (\ref{itqdist1}) and (\ref{itqdist2}) imply that the TIN-optimality condition (\ref{tin_opt}) is satisfied at link $j$. This means that link $j$ can safely be added to the information-theoretic independent subset of higher priority links and get scheduled to transmit in the current time frame. This algorithm, therefore, seeks to find the largest possible \emph{distributed} information-theoretic independent subset based on the priority ordering of the links. However, it is clear that selecting $\eta=0.5$ might be too pessimistic and restrictive, and may prevent some links which cause and receive low levels of interference from being scheduled. Therefore, we will leave this variable as a design parameter, and as we will see in Section \ref{comp}, tuning this parameter can indeed improve the achievable sum-rate by this scheduling algorithm.

The remaining question is: How can the sources and destinations check whether their pertinent conditions are satisfied? This can be done by a simple signaling mechanism which is inspired by the FlashLinQ algorithm \cite{FLQ} and is a two-phase process, in each of which we assume that each link uses its own frequency band and transmissions are interference-free:

\begin{itemize}
\item In the first phase, all the sources transmit signals at their full power $P$. The destinations will receive their own desired signals and also all the interfering signals in separate frequency bands. Then, the destinations estimate their received $\text{SNR}$'s and $\text{INR}$'s and check if their desired conditions (\ref{itqdist1}) are satisfied. This phase is the same as that of FlashLinQ \cite{FLQ}.

\item In the second phase, contrary to the ``inverse power echo'' mentioned in the FlashLinQ algorithm \cite{FLQ}, the destinations also transmit signals at the same power level $P$ of the sources. Similar to the first phase, in this phase all the sources can estimate the value of their desired $\text{SNR}$'s and $\text{INR}$'s in order to verify the validity of condition (\ref{itqdist2}).
\end{itemize}

\begin{remk}
Clearly, power control at the transmitters may lead to an improvement in the performance of the scheme. However, due to the complication in implementing power control among the links in a distributed way, we disregard it in our scheme and use full power at all the transmitters. See e.g. \cite{power} on power control algorithms in D2D underlaid cellular networks.
\end{remk}

As it is obvious, the complexity of our distributed signaling mechanism is completely comparable to that of the FlashLinQ algorithm.

\section{Numerical Analysis}\label{sim}

In this section, we numerically analyze the performance of distributed ITLinQ in two distinct settings. First, in Section \ref{thm_sim} we assess the performance of distributed ITLinQ under the model developed in Section \ref{cap}. Furthermore, in Section \ref{comp} we compare the performance of distributed ITLinQ with FlashLinQ in a model similar to the one considered in \cite{FLQ}.

For concreteness, the algorithm used in this section for the performance evaluation of distributed ITLinQ is illustrated in pseudo-code format in Algorithm \ref{dist_itlq}. Here, we assume that through multiple iterations of the training mechanism introduced in Section \ref{decent}, each link is aware of the values of its own SNR and all its incoming and outgoing INR's and it is also aware of the active higher-order links. We assume that the knowledge of the active higher-order links is also available to each link in implementing the FlashLinQ scheme.
\begin{algorithm}
\caption{Implementation of Distributed ITLinQ}\label{dist_itlq}
\begin{algorithmic}[1]
\State \textbf{initialize} $active(1)=1,~active(j)=0,~\forall j=2,...,n$;
\State \textbf{for} $j=2,...,n$
\State \quad $S_j=\{i: i\leq j\text{ and }active(i)=1\}$
\State \quad \textbf{if} $\text{INR}_{ji}\leq M\text{SNR}_j^\eta$ at $\text{D}_j$, $\forall i\in S_j$
\State \quad \quad $flag_{\text{D}_j}$=1;
\State \quad \textbf{endif}
\State \quad \textbf{if} $\text{INR}_{ij}\leq M\text{SNR}_j^\eta$ at $\text{S}_j$, $\forall i\in S_j$
\State \quad \quad $flag_{\text{S}_j}$=1;
\State \quad \textbf{endif}
\State \quad $active(j)=flag_{\text{D}_j}~.~ flag_{\text{S}_j}$;
\State \textbf{end}
\State \textbf{return} $active$
\end{algorithmic}
\end{algorithm}
For the implementation of ITLinQ distributively, we also consider a second tuning parameter $M$ which adds more flexibility to our scheme. This parameter can in general be tuned to optimize the performance of the algorithm in any network setting. For the results of this section, We will set $M$ to be equal to 25 dB. Algorithm \ref{dist_itlq} returns a vector $active$ of length $n$ which specifies whether or not each link should be scheduled. In particular, for any $j\in\{1,...,n\}$, link $j$ is scheduled if and only if $active(j)=1$.

\subsection{Performance of Distributed ITLinQ under the Model of Section \ref{cap}}\label{thm_sim}

In this section, we analyze the performance of distributed ITLinQ under the model of Section \ref{cap}. In Section \ref{cap} we characterized the fraction of the capacity region that ITLinQ can achieve (to within a gap) with full knowledge of all the channel gains for an asymptotically large number of links. One may wonder how well ITLinQ can perform under this model for \emph{finite} number of links where each link only has a \emph{local} knowledge of the channel gain values. This provides the motivation for studying distributed ITLinQ in this setting.

For the sake of numerical analysis in this section, we assume that the sources are distributed uniformly within a circle of radius $R=10 km$ and each destination is assumed to be uniformly distributed in a circle of radius $r_n=r_0 n^{-\beta}$ around its corresponding source node where $r_0$ is chosen to be equal to $1km$. The squared channel gain at distance $r$ is taken to be equal to $r^{-2.5}$ (the constant $g_0$ is assumed to be normalized to 1). The transmit power is taken to be 10 dBm and the additive white Gaussian noise variance at the destination nodes is set to -110 dBm.

We have evaluated the sum-rate achievable by distributed ITLinQ in this setting with the value of $\eta$ set to 0.5 and $\beta$ taking values 0.5, 1 and 2. The result is illustrated in Figure \ref{thm_comp_sim}.
\begin{figure}[h]
\centering
\includegraphics[trim = 0.05in 2.3in 0in 2.3in, clip,width=0.47\textwidth]{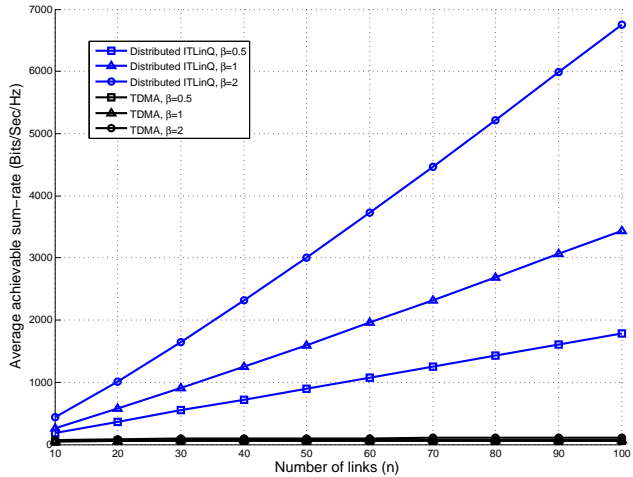}
\caption{Performance comparison of distributed ITLinQ with time-sharing under the model of Section \ref{cap}.}
\label{thm_comp_sim}
\end{figure}
For the sake of comparison, we have also plotted the average sum-rate that can be achieved by time-sharing among the links. As the figure demonstrates, there is a huge sum-rate improvement by using distributed ITLinQ over time-sharing. Moreover, while our theoretical analysis in Theorem \ref{main_frac} characterizes the fraction of the capacity region achievable by fully-centralized ITLinQ, what we observe in Figure \ref{thm_comp_sim} is that the network can still enjoy the significant sum-rate improvement that a distributed implementation of ITLinQ can provide.

\subsection{Performance Comparison of the distributed ITLinQ and FlashLinQ}\label{comp}

In this section, we will illustrate the performance of our distributed algorithm and compare it with FlashLinQ through numerical analysis. We drop $n$ links randomly in a $1km\times 1km$ square. The length of each link, which is the distance between its corresponding source and destination, is taken to be a uniform random variable in the interval $[2,65m]$. As in \cite{FLQ}, we use the carrier frequency of 2.4 GHz and a bandwidth of 5 MHz. The noise power spectral density is considered to be -184 dBm/Hz. The transmit power is set to 20 dBm. Moreover, the channel follows the LoS model in ITU-1411. In particular, if the base station antenna height is denoted by $h_b$, the mobile station antenna height is denoted by $h_m$ and the transmission wavelength is denoted by $\lambda$, then the transmission loss (in dB) at distance $d$ is taken to be equal to
\begin{align*}
L=L_{bp}+6+\begin{cases}
20\log_{10} \left(\frac{d}{R_{bp}}\right) & \text{if } d\leq R_{bp}\\
40\log_{10} \left(\frac{d}{R_{bp}}\right) & \text{if } d> R_{bp}
\end{cases},
\end{align*}
where $R_{bp}=\frac{4 h_b h_m}{\lambda}$ denotes the breakpoint distance and $L_{bp}=\left|20\log_{10} \left(\frac{\lambda^2}{8\pi h_b h_m}\right)\right|$ denotes the basic transmission loss at the break point. As in \cite{FLQ}, we assume all the antenna heights to be equal to $1.5m$, alongside with a log-normal shadowing with standard deviation of 10 dB. The antenna gain per device is taken to be -2.5 dB and the noise figure is assumed to be 7 dB.

Figure \ref{comparison} demonstrates the sum-rate achievable by the distributed ITLinQ scheme for different values of $\eta$ and its comparison to FlashLinQ. The implementation of FlashLinQ follows the same steps as in \cite{FLQ} and in particular, the threshold values $\gamma_{TX}$ and $\gamma_{RX}$ are taken to be equal to 9 dB.
\begin{figure}[h]
\centering
\subfigure[]{
\includegraphics[trim = 0.4in 2.6in 0.2in 2.7in, clip,width=0.497\textwidth]{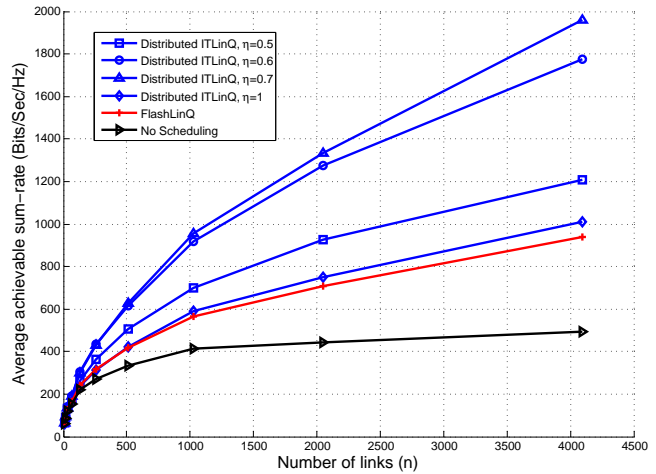}
\label{comparison}}
\subfigure[]{
\includegraphics[trim = 0.615in 2.72in .75in 2.9in, clip,width=0.46\textwidth]{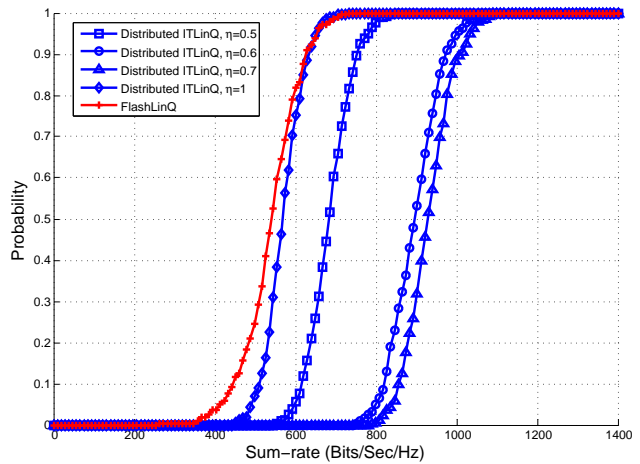}
\label{cdf}}
\caption{\subref{comparison} Comparison of the sum-rate performance of distributed ITLinQ, FlashLinQ and the no-scheduling case, and \subref{cdf} Comparison of the cumulative distribution function of the average link rate achieved by distributed ITLinQ and FlashLinQ in a network of 1024 links.}
\end{figure}
%
As the figure illustrates, tuning the parameter $\eta$ can lead to considerable gains over FlashLinQ. For the case of $\eta=0.5$, in which conditions (\ref{itqdist1}) and (\ref{itqdist2}) are sufficient for the optimality of TIN (to within a constant gap), distributed ITLinQ exhibits over 28\% gain compared to FlashLinQ for 4096 links. Interestingly, setting $\eta=0.7$ results in more than 110\% gain over FlashLinQ for 4096 links. However, as we increase $\eta$ to 1, more and more links get scheduled which results in a degradation in the overall performance.
As a baseline, the achievable sum-rate when there is no scheduling (i.e., all the links operate simultaneously) is also plotted in Figure \ref{comparison}.

Moreover, in the same setting, we also study the cumulative distribution function (CDF) of the sum-rate in a network of 1024 links. The result is depicted in Figure \ref{cdf}.
%
%
Again, the same trend occurs in this plot, showing that distributed ITLinQ, especially for the value of $\eta=0.7$, can result in considerable uniform gain compared to the sum-rate achievable by FlashLinQ. For instance, with 50\% probability, the sum-rate achieved by FlashLinQ is less than 540 bits/sec/Hz while with the same probability, the sum-rate achieved by distributed ITLinQ is less than 928 bits/sec/Hz.

Another natural aspect of distributed scheduling schemes that is of considerable importance is the  issue of fairness among the links. In particular, the scheduling scheme should take care of all links fairly, regardless of them being strong or weak. It can be seen that the distributed ITLinQ scheme favors strong links more than weak links. To highlight this issue, a network with two links AB and CD is shown in Figure \ref{ex_unfair}.
\begin{figure}[h]
\centering
\includegraphics[trim = 2.5in 3in 2.5in 2.7in, clip,width=0.35\textwidth]{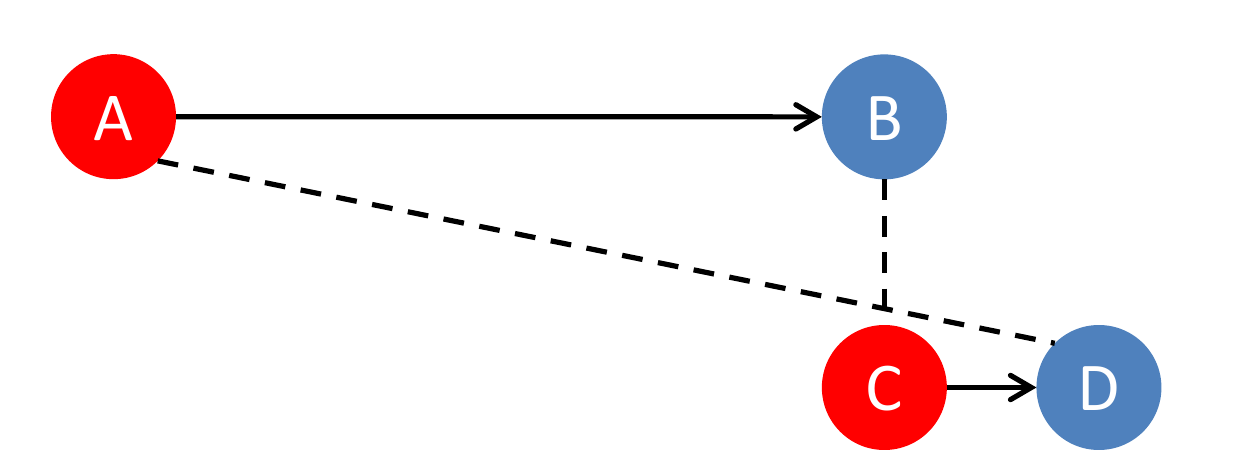}
\caption{An example of the case where distributed ITLinQ might be unfair.}
\label{ex_unfair}
\end{figure}
In this figure, link AB is a low-SNR link and link CD is a high-SNR link. Moreover, Destination node B suffers from strong interference due to the source node C. To see why ITLinQ may be unfair in such a scenario regardless of the priority of the links, consider the following two cases:
\begin{itemize}
\item If link AB has a lower priority than link CD, link CD is first scheduled. Then, destination B checks its scheduling condition $\text{INR}_{BC}\leq M \text{SNR}_{BA}^\eta$ and with a high probability may find that it is not satisfied (since the interference from C is strong compared to the signal power received from A). This will prevent link AB from being scheduled.

\item If link AB has a higher priority than link CD, it will be scheduled first. Then, since both destination node D is receiving a low amount of interference from A (compared to the signal power from D) and source node C is causing a low amount of interference at B (compared to the signal power it delivers to D), link CD will also get scheduled and hurts the transmission of link AB.
\end{itemize}

Therefore, in both cases, the low-SNR link AB will not get a high rate, if any. This motivates a modification of the distributed ITLinQ scheme to account for this issue. To this end, we present a \emph{fair} version of distributed ITLinQ as follows. Inspired by the example in Figure \ref{ex_unfair}, in the fair ITLinQ algorithm, the high-SNR links should get scheduled in a more restrictive way. This can be done by decreasing the parameters $\eta$ (and $M$) in the scheduling condition for the outgoing interference of high-SNR links. In general, $\eta$ (and $M$) need to be a descending function of SNR. However, one simple solution would be to choose a threshold $\text{SNR}_{th}$ such that if the SNR of a link is higher than this threshold, $\eta$ and $M$ are altered to decreased values $\bar{\eta}$ and $\bar{M}$. The pseudo-code for the fair ITLinQ scheme is presented in Algorithm \ref{fair_itlinq}.
\begin{algorithm}
\caption{Fair ITLinQ}\label{fair_itlinq}
\begin{algorithmic}[1]
\State \textbf{initialize} $active(1)=1,~active(j)=0,~\forall j=2,...,n$;
\State \textbf{for} $j=2,...,n$
\State \quad $S_j=\{i: i\leq j\text{ and }active(i)=1\}$
\State \quad \textbf{if} $\text{INR}_{ji}\leq M\text{SNR}_j^\eta$ at $\text{D}_j$, $\forall i\in S_j$
\State \quad \quad $flag_{\text{D}_j}$=1;
\State \quad \textbf{endif}
\State \quad \textbf{if} $\text{SNR}_j\leq \text{SNR}_{th}$
\State \quad \quad \textbf{if} $\text{INR}_{ij}\leq M\text{SNR}_j^\eta$ at $\text{S}_j$, $\forall i\in S_j$
\State \quad \quad \quad $flag_{\text{S}_j}$=1;
\State \quad \quad \textbf{endif}
\State \quad \textbf{else}
\State \quad \quad \textbf{if} $\text{INR}_{ij}\leq \bar{M}\text{SNR}_j^{\bar{\eta}}$ at $\text{S}_j$, $\forall i\in S_j$
\State \quad \quad \quad $flag_{\text{S}_j}$=1;
\State \quad \quad \textbf{endif}
\State \quad \textbf{endif}
\State \quad $active(j)=flag_{\text{D}_j}~.~ flag_{\text{S}_j}$;
\State \textbf{end}
\State \textbf{return} $active$
\end{algorithmic}
\end{algorithm}
To assess the performance of fair ITLinQ in terms of fairness, we have numerically evaluated the CDF of the link rates (averaged over both priorities and locations) for a network with 1024 links under the same model as the one mentioned in the beginning of this Section. The threshold value for high-SNR is chosen to be $\text{SNR}_{th}=110$ dB and the modified parameters are set to $\bar{\eta}=0.6$ and $\bar{M}=20$ dB. Figure \ref{cdf_fair} compares the CDF of the average link rate by distributed ITLinQ (with $\eta=0.7$), fair ITLinQ and FlashLinQ.
\begin{figure}[h]
\centering
\includegraphics[trim = .8in 2.8in .9in 3in, clip,width=0.45\textwidth]{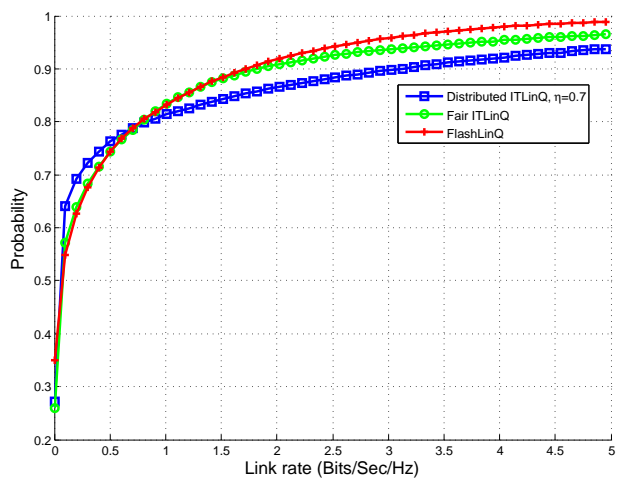}
\caption{Comparison of the average link rate CDF of distributed ITLinQ, fair ITLinQ and FlashLinQ for a network with 1024 links.}
\label{cdf_fair}
\end{figure}
As the figure illustrates, fair ITLinQ can improve the tail distribution of distributed ITLinQ and perform as well as FlashLinQ in terms of fairness. This certainly does not come for free and in fact, there is a trade-off between fairness and the achievable sum-rate. The sum-rate achieved by fair ITLinQ is compared with FlashLinQ in Figure \ref{sumrate_fair}.
\begin{figure}[h]
\centering
\includegraphics[trim = .7in 2.8in .9in 3in, clip,width=0.45\textwidth]{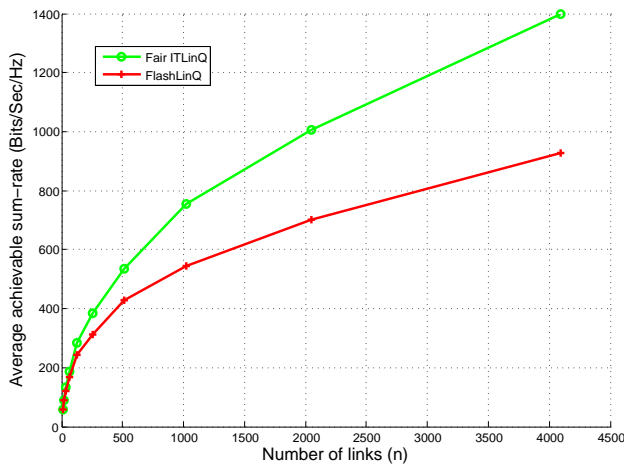}
\caption{Comparison of the sum-rate achievable by fair ITLinQ and FlashLinQ.}
\label{sumrate_fair}
\end{figure}
As the figure illustrates, for the case of 4096 links, the sum-rate gain of fair ITLinQ over FlashLinQ is more than 50\%. For more information regarding the software implementation of ITLinQ, the reader is referred to \cite{itlq_web}.

\section{Concluding Remarks and Future Directions}\label{conc}
In this paper, we introduced a new spectrum sharing scheme, called information-theoretic link scheduling (ITLinQ), in order to manage the interference in wireless networks. The scheme relies on the recently-found optimality condition for treating interference as noise and at each time, schedules a subset of links in which treating interference as noise is information-theoretically optimal (to within a constant gap). We presented a performance guarantee of the ITLinQ scheme by characterizing the fraction of the capacity region that it is able to achieve in a specific network setting. Moreover, we developed a distributed way of implementing the ITLinQ scheme and showed, via numerical analysis, that it yields considerable gains over FlashLinQ, a similar recently-proposed scheduling algorithm. We also showed how to address the issue of fairness among the links in the network by introducing a fair version of the distributed ITLinQ scheme.

There are multiple future directions to consider following this work. First, in this work we presented a lower bound on the fraction of the capacity region that ITLinQ is able to achieve. It would be interesting to see if there exists an upper bound on the achievable fraction of the capacity region by the ITLinQ scheme so that its performance can be characterized more precisely. This can be accomplished through new outer bounds on the capacity region of the interference channels. Second, it is worthy to study the performance of the ITLinQ scheme under different models of D2D networks other than the ones presented in this work. In particular, models with time-varying topology that allow links to dynamically share the spectrum might be of interest. Third, one can think about generalizing the ITLinQ scheme to multihop D2D networks. Especially, the result in \cite{kkk} shows that coupling between interference management and relaying strategies can provide significant gains. Hence, it would be worth figuring out the performance improvement that ITLinQ is able to guarantee in a multihop setting. Fourth, it would be interesting to investigate the impact of more advanced interference management techniques, such as successive interference cancellation, on ITLinQ. For example, recent results in \cite{localview,tim,topology} demonstrate that by a careful use of repetition coding at the transmitters and temporal interference neutralization at the receivers, one can achieve spectral efficiency gains that are considerably beyond the common interference avoidance approach. Thus, an interesting future direction can be to characterize the impact of structured repetition coding and temporal interference neutralization on ITLinQ. Finally, as mentioned in Section \ref{intro}, FlashLinQ has been implemented in a practical testbed and its performance has been measured and evaluated. Hence, it is interesting to also test the ITLinQ scheme and observe the improvements that it is able to yield in practice.

\begin{IEEEbiography}
    [{\includegraphics[width=1in,height=1.25in,clip,keepaspectratio]{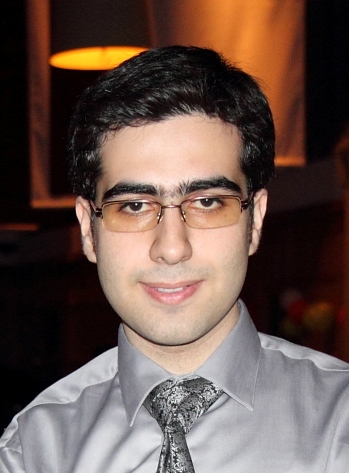}}]{Navid NaderiAlizadeh}
received his B.S. degree in electrical engineering from Sharif University of Technology, Tehran, Iran, in 2011 and his M.S. degree in electrical and computer engineering from Cornell University, Ithaca, NY, in 2014. He joined University of Southern California, Los Angeles, CA, in 2014, where he is pursuing his Ph.D. in the department of electrical engineering. His research is focused on developing spectrum sharing mechanisms in wireless networks, especially in device-to-device (D2D) communication networks. Navid obtained the first rank in the Iranian nationwide university entrance exam in 2007. He was also a recipient of Jacobs scholarship in 2011.
\end{IEEEbiography}

\begin{IEEEbiography}
    [{\includegraphics[width=1in,height=1.25in,clip,keepaspectratio]{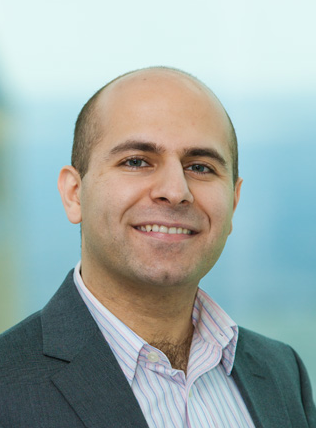}}]{Amir Salman Avestimehr}
received the B.S. degree in electrical engineering from Sharif University of Technology, Tehran, Iran, in 2003 and the M.S. degree and Ph.D. degree in electrical engineering and computer science, both from the University of California, Berkeley, in 2005 and 2008, respectively. He is currently an Associate Professor at the EE department of University of Southern California, Los Angeles, CA. He was also a postdoctoral scholar at the Center for the Mathematics of Information (CMI) at the California Institute of Technology, Pasadena, in 2008. His research interests include information theory, the theory of communications, and their applications.

Dr. Avestimehr has received a number of awards for research and teaching, including the Communications Society and Information Theory Society Joint Paper Award in 2013, the Presidential Early Career Award for Scientists and Engineers (PECASE) in 2011, the Michael Tien 72 Excellence in Teaching Award in 2012, the Young Investigator Program (YIP) award from the U. S. Air Force Office of Scientific Research in 2011, the National Science Foundation CAREER award in 2010, and the David J. Sakrison Memorial Prize in 2008. He is currently an Associate Editor for the IEEE Transactions on Information Theory. He has also been a Guest Associate Editor for the IEEE Transactions on Information Theory Special Issue on Interference Networks and General Co-Chair of the 2012 North America Information Theory Summer School and the 2012 Workshop on Interference Networks.
\end{IEEEbiography}

\end{document}